\g@addto@macro{\UrlBreaks}{\UrlOrds}
\newif\ifsubmit\submittrue
\newcommand{\comment}[1]{}
\newcommand{\mwh}[1]{}
\newcommand{\sbmcomment}[1]{}
\newcommand{\iscomment}[1]{}
\newcommand{\jmccomment}[1]{}
\newcommand{\comment}[1]{\textcolor{blue}{#1}}
\newcommand{\mwh}[1]{\textcolor{blue}{MWH: #1}}
\newcommand{\sbmcomment}[1]{\textcolor{violet}{SBM: #1}}
\newcommand{\iscomment}[1]{\textcolor{blue}{IS: #1}}
\newcommand{\jmccomment}[1]{\textcolor{blue}{JMC: #1}}
\providecommand{\bigsqcap}{%
  \mathop{%
    \mathpalette\@updown\bigsqcup
  }%
}
\newcommand*{\@updown}[2]{%
  \rotatebox[origin=c]{180}{$\m@th#1#2$}%
}
\DeclareFontFamily{U}{MnSymbolC}{}
\DeclareSymbolFont{MnSyC}{U}{MnSymbolC}{m}{n}
\DeclareFontShape{U}{MnSymbolC}{m}{n}{
    <-6>  MnSymbolC5
   <6-7>  MnSymbolC6
   <7-8>  MnSymbolC7
   <8-9>  MnSymbolC8
   <9-10> MnSymbolC9
  <10-12> MnSymbolC10
  <12->   MnSymbolC12%
}{}
\DeclareMathSymbol{\powerset}{\mathord}{MnSyC}{180}
\begin{document}

%Works on MiKTeX only
%hint by http://goemonx.blogspot.de/2012/01/pdflatex-ligaturen-und-copynpaste.html
%also http://tex.stackexchange.com/questions/4397/make-ligatures-in-linux-libertine-copyable-and-searchable
%This allows a copy'n'paste of the text from the paper
\input glyphtounicode.tex
\pdfgentounicode=1

\title{What's the Over/Under?\\ Probabilistic Bounds on Information Leakage}

%Single insitute
\author{
Ian Sweet$^1$ \and
Jos\'e Manuel Calder\'on Trilla$^2$ \and
Chad Scherrer$^2$ \and
Michael Hicks$^1$ \and
Stephen Magill$^2$
%Anonymous
}
\institute{~$^1$University of Maryland and~$^2$Galois Inc.}
%\institute{}

\maketitle

\begin{abstract}
  Quantitative information flow (QIF) is concerned with measuring
  how much of a secret is leaked to an adversary who observes the
  result of a computation that uses it. Prior work has shown that QIF
  techniques based on \emph{abstract interpretation} with
  \emph{probabilistic polyhedra} can be used to analyze the worst-case
  leakage of a query, on-line, to determine whether that query can be
  safely answered. While this approach can provide precise estimates,
  it does not scale well. This paper shows how to solve the
  scalability problem by augmenting the baseline technique with
  \emph{sampling} and \emph{symbolic execution}. We prove that our
  approach never underestimates a query's leakage (it is
  sound), and detailed experimental results show that we can
  match the precision of the baseline technique but with orders of
  magnitude better performance.
\end{abstract}

%\iscomment{Say more about concolic execution as well as sampling?'}

% \keywords{...}

\section{Introduction}\label{sec:intro}

As more sensitive data is created, collected, and analyzed, we face the problem of
how to productively use this data while preserving privacy.
One approach to this problem is to analyze a query $f$
in order to \emph{quantify} how much information about
secret input $s$ is leaked by the output $f(s)$.
More precisely, we can consider a querier to have some \emph{prior
belief} of the secret's possible values. The belief can be modeled as a
probability distribution~\cite{clarkson09quantifying}, i.e., a
function $\delta$ from each possible value of $s$ to its probability. When a querier
observes output $o = f(s)$,
he \emph{revises} his belief, using Bayesian inference, to
produce a \textit{posterior} distribution $\delta'$.
If the posterior could reveal too much about the
secret, then the query should be rejected. One common definition
of ``too much'' is \emph{Bayes Vulnerability}, which is the
probability of the adversary guessing the secret in one
try~\cite{smith09foundations}. Formally,
%$$V(\delta) \defeq \delta(\text{argmax}_i(\delta(i)))$$
$$V(\delta) \defeq \text{max}_i~\delta(i)$$
Various works~\cite{mardziel13belieflong,Besson2014,Guarnieri17,Kucera:2017:SPP:3133956.3134079}
propose rejecting $f$ if there exists an
output that makes the vulnerability of the posterior exceed a fixed
threshold $K$. In particular, for all possible values $i$ of $s$ (i.e., $\delta(i) > 0$), if
the output $o = f(i)$ could induce a posterior $\delta'$ with
$V(\delta') > K$, then the query is rejected.

% As knowledge is
% gained by observing the results of data analysis, the most likely
% values of private data become more apparent, so a privacy policy can
% define a threshold level of vulnerability that should not be
% exceeded~\cite{mardziel13belieflong}.

One way to implement this approach is to estimate $f(\delta)$---the
distribution of $f$'s outputs when the inputs are
distributed according to $\delta$---by viewing $f$ as a
program in a \emph{probabilistic programming language}
(PPL)~\cite{Gordon:2014:PP:2593882.2593900}. Unfortunately, as
discussed in Section~\ref{sec:related}, most PPLs are approximate in a
manner that could easily result in \emph{underestimating} the
vulnerability, leading to an unsafe security decision. Techniques
designed specifically to quantify information leakage often assume
only uniform priors, cannot compute vulnerability (favoring, for
example, Shannon entropy), and/or cannot
maintain assumed knowledge between queries.

Mardziel et al.~\cite{mardziel13belieflong} propose a
\emph{sound} analysis technique based on abstract
interpretation~\cite{CousotCousot77}. In particular, they estimate a
program's probability distribution using an abstract domain called a
\emph{probabilistic polyhedron} (PP), which pairs a standard numeric
abstract domain, such as \emph{convex
  polyhedra}~\cite{Cousot:1978:ADL:512760.512770}, with some
additional \emph{ornaments}, which include lower and upper bounds on
the size of the support of the distribution, and bounds on the
probability of each possible secret value. Using PP can yield a precise, yet
safe, estimate of the vulnerability, and allows the posterior PP
(which is not necessarily uniform) to be used as a prior for the next query.
Unfortunately, PPs can be very inefficient. Defining \emph{intervals}~\cite{cousot76static} as the
PP's numeric domain can dramatically improve performance, but only with an
unacceptable loss of precision.

In this paper we present a new approach that ensures a better balance of both precision and
performance in vulnerability computation, augmenting PP with two new
techniques. In both cases we begin by analyzing a query using the fast
interval-based analysis.
Our first technique is then to use \emph{sampling} to
augment the result. In particular, we execute the query using possible
secret values $i$ sampled from the posterior $\delta'$ derived from a
particular output $o_i$. If the analysis were perfectly accurate,
executing $f(i)$ would produce $o_i$. But since intervals are
overapproximate, sometimes it will not. With many sampled outcomes,
we can construct a Beta distribution to estimate the size of
the support of the posterior, up to some level of confidence. We can
use this estimate to boost the lower bound of the abstraction, and
thus improve the precision of the estimated vulnerability.

Our second technique is of a similar flavor, but uses symbolic
reasoning to magnify the impact of a successful sample. In particular, we
execute a query result-consistent sample
\emph{concolically}~\cite{Sen:2005:CCU:1081706.1081750}, thus maintaining a
symbolic formula (called the \emph{path condition}) that characterizes
the set of variable valuations that would cause execution to follow the observed
path. We then count the number of possible
solutions and use the count to boost the
lower bound of the support (with 100\% confidence).

Sampling and concolic execution can be combined for even greater precision.

We have formalized and proved our techniques are sound
(Sections~\ref{sec:formalism}--\ref{sec:concolic}) and implemented and
evaluated them (Sections~\ref{sec:impl} and~\ref{sec:exp}). Using a
privacy-sensitive ship planning scenario
(Section~\ref{sec:overview}) we find that our techniques provide
  \emph{similar precision to convex polyhedra while providing
  orders-of-magnitude better performance}. As far as we are aware
(Section~\ref{sec:related}), our approach constitutes the best balance
of precision and performance proposed to date for estimating query
vulnerability. Our implementation freely available at
\url{https://github.com/GaloisInc/TAMBA}.

\section{Overview}
\label{sec:overview}

% \subsection{Motivating Scenario}

To provide an overview of our approach, we will describe the
application of our techniques to a scenario that involves a
coalition of ships from various nations operating in a shared region.
Suppose a natural disaster has impacted some islands in
the region.  Some number of individuals need to be evacuated from the
islands, and it falls to a regional disaster response coordinator to
determine how to accomplish this.  While the coalition wants to
collaborate to achieve these humanitarian aims, we assume that each
nation also wants to protect their sensitive data---namely ship
locations and capacity.

\begin{figure}[t]
\centering
\begin{tabular}{l@{~}|@{~}l@{~~}lc}
Field & Type & Range & Private? \\
\hline
ShipID & Integer & 1--10 & No \\
NationID & Integer & 1--20 & No \\
Capacity & Integer & 0--1000 & Yes \\
Latitude & Integer & -900,000--900,000 & Yes \\
Longitude & Integer & -1,800,000--1,800,000 & Yes \\ \hline
\end{tabular}
\caption{\label{fig:data}The data model used in the evacuation scenario.}
\end{figure}

More formally, we assume the use of the data model shown in
Figure~\ref{fig:data}, which considers a set of ships, their
coalition affiliation, the evacuation capacity of the ship, and its
position, given in terms of latitude and longitude.\footnote{We give latitude
and longitude values as integer representations of \emph{decimal
  degrees} fixed to four decimal places; e.g.,  14.3579 decimal degrees
is encoded as 143579.} We sometimes refer to the
latter two as a location $L$, with $L.x$ as the longitude and $L.y$ as
the latitude. We will often index properties by
ship ID, writing $\text{Capacity}(z)$ for the capacity associated with
ship ID $z$, or $\text{Location}(z)$ for the location.

The \textbf{evacuation problem} is defined as follows
\begin{quote}
  Given a target location $L$ and number of people to evacuate $N$,
  compute a set of nearby ships $S$ such that
  $\sum_{z \in S} \text{Capacity}(z) \geq N$.
\end{quote}
Our goal is to solve this problem in a way that minimizes the
vulnerability to the coordinator of private information, i.e., the
ship locations and their exact capacity.  We assume that this
coordinator initially has no knowledge of the positions or
capabilities of the ships other than that they fall within certain
expected ranges.

If all members of the coalition share all of their data with the
coordinator, then a solution is easy to compute, but it affords no
privacy. Figure~\ref{fig:bin-search} gives an algorithm the response
coordinator can follow that does not require each member to share all
of their data. Instead, it iteratively performs queries
\textit{AtLeast} and \textit{Nearby}. These queries do not
reveal precise values about ship locations or capacity, but rather
admit ranges of possibilities. The algorithm
works by maintaining upper and lower bounds on the capacity of each
ship \lstinline{i} in the array \lstinline{berths}.  Each ship's
bounds are updated based on the results of queries about its capacity
and location. These queries aim to be privacy preserving, doing a sort
of binary search to narrow in on the capacity of each ship in the
operating area. The procedure completes once \lstinline{is_solution}
determines the minimum required capacity is reached.

\begin{figure}[t]
\centering
\begin{minipage}{5.5in}
\begin{lstlisting}[numbers=none]
(* S = #ships; N = #evacuees; L = island loc.; D = min. proximity to L *)
      let berths = Array.make S (0,1000)
      let is_solution () = sum (Array.map fst berths) >= N
      let mid (x,y) = (x + y) / 2
      let <@\textit{AtLeast(z,b)}@> = Capacity(z) >= b
      let <@\textit{Nearby(z,l,d)}@> = |Loc(z).x - l.x| + |Loc(z).y - l.y| <= d
      while true do
        for i = 0 to S do
          let ask = mid berths[i]
          let ok = <@\textit{AtLeast}@>(i,ask) && <@\textit{Nearby}@>(i,L,D)
          if ok then  berths[i] <- (ask, snd berths[i])
          else        berths[i] <- (fst berths[i], ask)
          if is_solution () then return berths
        done
      done
\end{lstlisting}
\end{minipage}
\caption{Algorithm to solve the evacuation problem for a single island.}
\label{fig:bin-search}
\end{figure}

% We can analyze the information flow properties of both approaches, but
% focus here on queries that provide bounds on location and capacity.

\subsection{Computing vulnerability with abstract interpretation}

Using this procedure, what is revealed about the private variables
(location and capacity)? Consider a single $\mathit{Nearby}(z,l,d)$ query.
At the start, the coordinator is assumed to know only that $z$ is
somewhere within the operating region. If the query returns
\lstinline{true}, the coordinator now knows that $s$ is within $d$
units of $l$ (using Manhattan distance). This makes
$\textit{Location}(z)$ more vulnerable because the adversary has less
uncertainty about it.

% This situation is depicted in
% Figure~\ref{fig:prob-AI}(a). It shows a location $L$ and a diamond
% depicting the points at manhattan distance $d$ from $L$. The
% \lstinline{true} response rules out all points outside of the diamond
% as possible locations for $s$. To compute the vulnerability following
% the query, we compute the probability of the most likely
% point. Assuming a uniform prior would mean a uniform posterior, so
% computing the vulnerability just requires knowing the size $n$ of the
% posterior's support---each of $n$ uniformly distributed values has
% probability $1/n$.

Mardziel et al.~\cite{mardziel13belieflong} proposed a static analysis
for analyzing queries such as $\mathit{Nearby}(z,l,d)$ to estimate the
worst-case vulnerability of private data.
If the worst-case vulnerability is too great, the query can
be rejected. A key element of their approach is to perform abstract
interpretation over the query using an abstract domain called a
\textit{probabilistic polyhedron}. An element $P$ of this domain
represents the set of possible distributions over the query's state. This
state includes both the hidden secrets and the visible query
results. The abstract interpretation is sound in the sense that the
true distribution $\delta$ is contained in the set of distributions
represented by the computed probabilistic polyhedron $P$.

A probabilistic polyhedron $P$ is a tuple comprising a \emph{shape}
and three \emph{ornaments}. The shape $C$ is an element of a standard
numeric domain---e.g., intervals~\cite{cousot76static},
octagons~\cite{mine01octagon}, or
convex polyhedra~\cite{Cousot:1978:ADL:512760.512770}---which
overapproximates the set of possible values in the support of the
distribution. The ornaments $p \in [0, 1]$, $m \in \mathbb{R}$, and $s \in \mathbb{Z}$
are pairs which store upper and lower bounds on the
probability per point, the total mass, and number of support points in
the distribution, respectively. (Distributions represented by $P$
are not necessarily normalized, so the mass $m$ is not always $1$.)

Figure~\ref{fig:prob-AI}(a) gives an example probabilistic polyhedron
that represents the posterior of a \textit{Nearby} query that returns true. In
particular, if \textit{Nearby(z,$L_1$,D)} is true then Location($z$) is
somewhere within the depicted diamond around $L_1$. Using convex
polyhedra or octagons for the shape domain would permit representing
this diamond exactly; using intervals would overapproximate it as the
depicted 9x9 bounding box. The ornaments would be the same in any
case: the size $s$ of the support is 41 possible (x,y) points,
the probability $p$ per point is $0.01$, and the total mass
is $0.41$, i.e., $p \cdot s$. In general, each ornament is a pair of
a lower and upper bound (e.g., $s_\text{min}$ and $s_\text{max}$), and $m$ might be a more accurate estimate
than $p \cdot s$. In this case shown in the figure, the bounds are
tight.

Mardziel et al's procedure works by computing the posterior $P$ for
each possible query output $o$, and from that posterior determining the
vulnerability. This is easy to do. The upper bound $p_\text{max}$
of $p$ maximizes the probability of any given point. Dividing this by
the \emph{lower bound} $m_\text{min}$ of the probability mass $m$ normalizes this
probability for the worst case. For $P$ shown in Figure~\ref{fig:prob-AI}(a), the bounds of $p$ and
$m$ are tight, so the vulnerability is simply $0.01 / 0.41 = 0.024$.

% This probabilistic polyhedron-based analysis can be easily used to
% compute vulnerability.  The main step required is
% \emph{normalization}.  The abstract shape paired with the ornaments
% captures the effect that learning the query output has on attacker
% knowledge.  Values that are impossible given the output have been
% removed from the distribution, and values that are less likely are
% scaled down. To compute bounds on the probability of private values,
% we must normalize the distribution by dividing by the probability
% mass.  We are interested in sound upper bounds on probability, so we
% want to maximize $p/m$, where $p$ is the probability associated with a
% private data value and $m$ is the probability mass.  This quotient is
% maximized when $p$ is maximized and $m$ is minimized, so we divide the
% upper bound on $p$ by the lower bound on $m$.
% Figure~\ref{fig:prob-AI} includes examples of this maximum probability
% computation.

\subsection{Improving precision with sampling and concolic execution}
\label{sec:samp_and_conc}

\begin{figure}
\centering
\begin{tabular}{c}
\includegraphics[width=0.5\columnwidth]{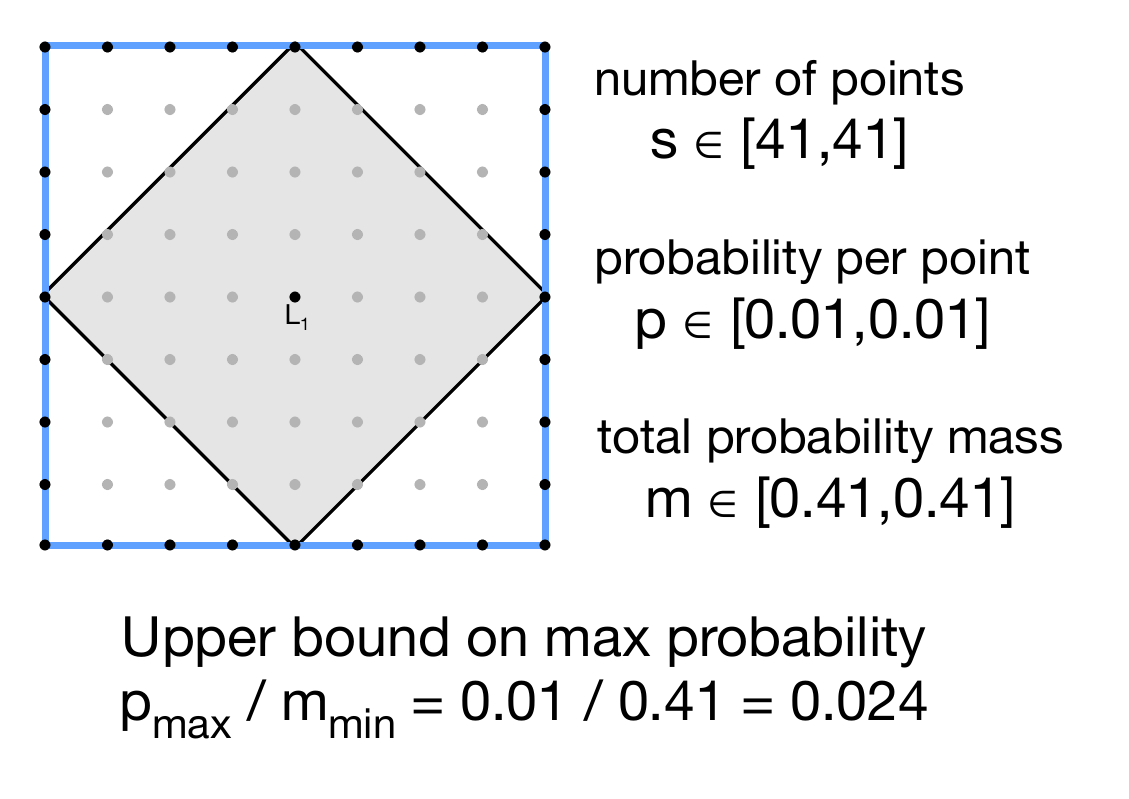}\\
\\
(a) Probabilistic polyhedra \\
\\
\includegraphics[width=\columnwidth]{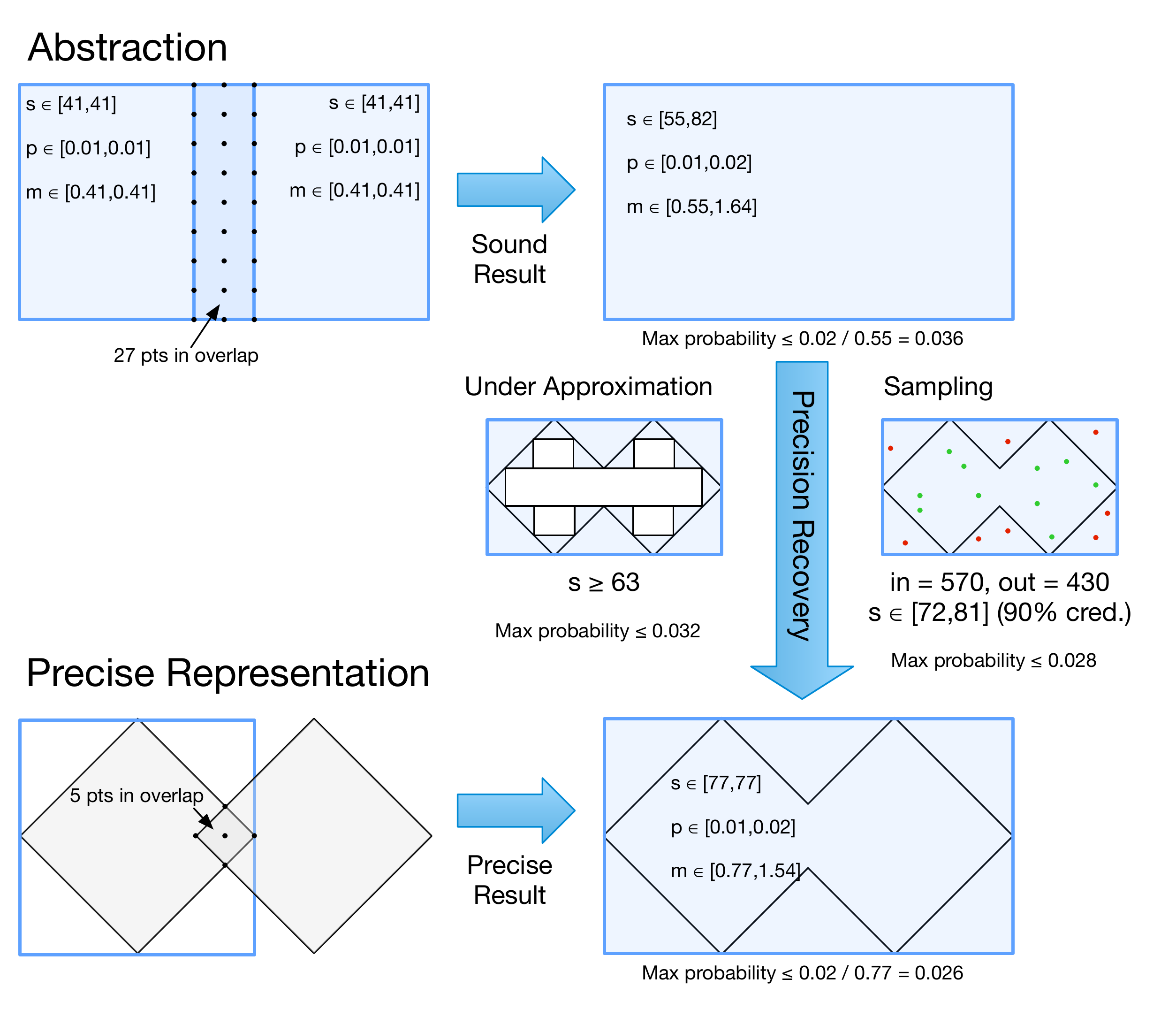}\\
\\
(b) Improving precision with sampling and underapproximation
(concolic execution)
\\
\end{tabular}
\caption{Computing vulnerability (max probability) using abstract
  interpretation}
\label{fig:prob-AI}
\end{figure}

In Figure~\ref{fig:prob-AI}(a), the parameters $s$, $p$, and $m$ are
precise.  However, as additional operations are performed, these
quantities can accumulate imprecision.  For example, suppose we are
using intervals for the shape domain, and we wish to analyze the query
$\textit{Nearby}(z,L_1,4) \vee \textit{Nearby}(z,L_2,4)$ (for some
nearby point $L_2$). The result is produced by analyzing the two
queries separately and then combining them with an \emph{abstract
  join}; this is shown in the top row of
Figure~\ref{fig:prob-AI}(b). Unfortunately, the result is very
imprecise. The bottom row of Figure~\ref{fig:prob-AI}(b) illustrates
the result we would get by using convex polyhedra as our shape
domain. When using intervals (top row), the vulnerability is estimated
as 0.036, whereas the precise answer (bottom row) is actually
0.026. Unfortunately, obtaining this precise answer is far more
expensive than obtaining the imprecise one.

This paper presents two techniques that can allow us to use the less
precise interval domain but then \emph{recover} lost precision in a
relatively cheap post-processing step. The effect of our techniques is
shown in the middle-right of Figure~\ref{fig:prob-AI}(b). Both
techniques aim to obtain better lower bounds for $s$.  This allows us
to update lower bounds on the probability mass $m$ since
$m_\text{min}$ is at least $s_\text{min} \cdot p_\text{min}$ (each
point has at least probability $p_\text{min}$ and there are at least
$s_\text{min}$ of them). A larger $m$ means a smaller vulnerability.

The first technique we explore is \emph{sampling}, depicted to the right of
the arrow in Figure~\ref{fig:prob-AI}(b).  Sampling chooses random
points and evaluates the query on them to determine whether they are in the
support of the posterior distribution for a particular query result.
By tracking the ratio of points that produce the expected output, we can
produce an estimate of $s$, whose confidence increases as we include more
samples.  This approach is depicted in the figure, where
we conclude that $s \in [72,81]$ and $m \in [0.72,1.62]$ with 90\% confidence after taking 1000
samples, improving our vulnerability estimate to $V \leq \frac{0.02}{0.72} = 0.028$.

The second technique we explore is the use of \emph{concolic
  execution} to derive a \emph{path condition}, which is a formula
over secret values that is consistent with a query result. By
performing \emph{model counting} to estimate the number of solutions
to this formula, which are an underapproximation of the true size of
the distribution, we can safely boost the lower bound of
$s$. This approach is depicted to the left of the arrow in
Figure~\ref{fig:prob-AI}(b). The depicted shapes represent discovered
path condition's disjuncts, whose size sums to 63. This is a better
lower bound on $s$ and improves the vulnerability estimate to 0.032.

These techniques can be used together to further increase
precision. In particular, we can first perform concolic execution, and
then sample from the area not covered by this
underapproximation. Importantly, Section~\ref{sec:exp} shows that
using our techniques with the interval-based analysis yields an orders of
magnitude performance improvement over using polyhedra-based analysis
alone, while achieving similar levels of precision, with high confidence.

% TODO: The thing to do here is write a prob file that implements what's in the
% figure and use Poly, Oct, then Box to get the values

\section{Preliminaries: Syntax and Semantics}\label{sec:formalism}

This section presents the core language---syntax and semantics---in
which we formalize our approach to computing vulnerability. We also
review \emph{probabilistic polyhedra}~\cite{mardziel13belieflong},
which is the baseline analysis technique that we augment.

% None of this
% section is new; it establishes definitions and concepts from earlier
% work. We begin by presenting the syntax of the language. Next, we
% present the semantics of the language in terms of how programs denote
% distributions of states. Finally, we present \emph{probabilistic
%   polyhedra}, which is a computable abstraction for the
% distribution-based semantics.

\begin{figure}[t]
\[
\begin{array}{llcl}
\mathit{Variables} & x & \in & \vars \\
\mathit{Integers} & n & \in & \Integer \\
\mathit{Rationals} & q & \in & \Rational \\
\mathit{States} & \sigma & \in & \states \defeq \vars \rightharpoonup \Integer \\
\mathit{Distributions} & \delta & \in & \dists \defeq \states \rightarrow \Real+_0 \\
\mathit{Arith. ops} & \arithop &::= & + \mid \times \mid - \\
\mathit{Rel. ops} & \relop &::= & \leq \;\mid\; < \;\mid\; = \;\mid\; \neq \;\mid\; \cdots
\\
\mathit{Arith. exps} & \aexp &::= & x \mid n \mid \aop{\aexp_1}{\aexp_2} \\
\mathit{Bool. exps} & \bexp &::= & \bop{\aexp_1}{\aexp_2} \mid \bexp_1 \wedge \bexp_2 \mid \bexp_1 \vee \bexp_2 \mid \aneg{\bexp} \\
% \strue \mid \sfalse \mid \\

\mathit{Statements} & \stmt &::= & \sskip \mid \sassign{x}{\aexp} \mid \sseq{\stmt_1}{\stmt_2} \mid \swhile{\bexp}{\stmt} \mid \\
&     && \sif{\bexp}{\stmt_1}{\stmt_2} \mid \spif{q}{\stmt_1}{\stmt_2} \\
% \mid \\
% &     && \suniform{x}{n_1}{n_2} \\
\end{array}
\]
\caption{Core language syntax}
\label{fig:syntax}
\end{figure}

\subsection{Core Language and Semantics}

The programming language we use for queries is given in
Figure~\ref{fig:syntax}. The language is essentially standard, apart
from $\spif{q}{\stmt_1}{\stmt_2}$, which implements probabilistic
choice: $\stmt_1$ is executed with probability $q$, and $\stmt_2$ with
probability $1-q$. We limit the form of expressions $E$ so that they
can be approximated by standard numeric abstract domains such as convex
polyhedra~\cite{Cousot:1978:ADL:512760.512770}. Such domains require
linear forms; e.g., there is no division operator and
multiplication of two variables is disallowed.\footnote{Relaxing such
limitations is possible---e.g., polynominal inequalities
can be approximated using convex
polyhedra~\cite{bagnara2005generation}---but doing so precisely and
scalably is a challenge.}

We define the semantics of a program in terms of its effect on
(discrete) distributions of states. States $\sigma$ are partial maps from
variables to integers; we write $ \dom{\sigma} $ for the set of
variables over which $\sigma$ is defined.  Distributions $\delta$
are maps from states to nonnegative real numbers, interpreted as
probabilities (in range $[0,1]$). The denotational semantics
considers a program as a relation between
distributions. In particular, the semantics of statement $\stmt$, written
$\pevalp{\stmt}{}$, is a function of the form $\dists \rightarrow \dists$;
we write $\pevalp{\stmt}{\delta} = \delta'$ to say that the semantics
of $\stmt$ maps input distribution $\delta$ to output distribution
$\delta'$. Distributions are not necessarily normalized; we write
$\pmass{\delta}$ as the probability mass of $\delta$ (which
is between 0 and 1). We write $\dot{\sigma}$ to denote the point
distribution that gives $\sigma$ probability 1, and all other
states 0.
% We sometimes focus our attention on distributions over states of a
% fixed set of variables $V$, in which case we write $\delta_V \in
% \dists_V$ to mean a function $\states_V \rightarrow \Real+$.
% In the
% context of distributions, \emph{domain} will also refer to the set $
% \fv{\delta} $ as opposed to $ \dom{\delta} $.

The semantics is standard and not crucial in order to understand our
techniques.
%Here we present three definitions from the semantics that
%will aid the exposition of this paper.
In Appendix~\ref{sec:dist-semantics}
we provide the semantics in full, see Clarkson et
al.~\cite{clarkson09quantifying} or Mardziel et
al~\cite{mardziel13belieflong} for detailed explanations.
% JMCT: These are the definitions related to projection (I also included
% the 'forget' function, since that's related)
% MWH: We use projection on $P$, so we should introduce that
% somewhere. The stuff below is on delta and sigma; not sure if we
% strictly need that.
% JMCT: Got it, there's some notational baggage in the defintions that make
% them better to introduce below, which I've done now.

%Here, $\project{\sigma}{V}$ projects a state, $\sigma$, onto a set of variables
%$V$; $\project{delta}{V}$ projects a \emph{distribution}, $\delta$, onto a set
%of variables $V$; and $\forget{x}{\delta}$ projects the distribution $\delta$
%onto all of its variables \emph{except} the variable $x$.

\subsection{Probabilistic polyhedra}

To compute vulnerability for a program $\stmt$ we must compute (an
approximation of) its output distribution. One way to do that would be
to use sampling: Choose states $\sigma$ at random from the input
distribution $\delta$, ``run'' the program using that input state, and
collect the frequencies of output states $\sigma'$ into a distribution
$\delta'$. % When running the program, we would interpret $\spifk$ using
% a source of pseudo-randomness.
While using sampling in this manner is
simple and appealing, it could be both expensive and imprecise. In
particular, depending on the size of the input and output space, it
may take many samples to arrive at a proper approximation of the
output distribution.

\emph{Probabilistic polyhedra}~\cite{mardziel13belieflong} can address
both problems. This abstract domain combines a standard
domain $\getpoly{}$ for representing numeric program states
with additional \emph{ornaments} that all together can safely
represent $\stmt$'s output distribution.

Probabilistic polyhedra
work for any numeric domain; in this paper we use both convex
polyhedra~\cite{Cousot:1978:ADL:512760.512770} and
intervals~\cite{cousot76static}. For concreteness, we present the
defintion using convex polyhedra. We use the meta-variables
$\ineq, \ineq_1, \ineq_2$, etc. to denote linear inequalities.  % We
% write $\fv\ineq$ to be the set of variables occurring in $\ineq$; we
% also extend this to sets, writing $\fv{\{\ineq_1,\ldots,\ineq_n\}}$
% for $\fv{\ineq_1} \cup \ldots \cup \fv{\ineq_n}$.

\begin{definition}
  A \emph{convex polyhedron} $\poly = (\cons, V)$ is a set of linear inequalities
  $\cons = \{\ineq_1,\ldots,\ineq_m\}$, interpreted conjunctively,
  over variables $ V $.  We write
  $\cp$ for the set of all convex polyhedra.   A polyhedron $\poly$
  represents a set of states, denoted $\pconc{\poly}$, as follows, where $\sigma \models \ineq$ indicates that the state $\sigma$ satisfies the inequality $\ineq$.
\[\pconc{\paren{\cons,V}} \defeq \{\sigma \given \dom{\sigma} =
V,\; \forall \ineq \in \cons.\ \sigma \models \ineq\}\]

Naturally we require that $\dom{\{\ineq_1,\ldots,\ineq_n\}} \subseteq V
$; i.e., $V$ mentions all variables in the inequalities. Let
$ \dom{\paren{\cons, V}} = V $.

\end{definition}

Probabilistic polyhedra extend this standard representation of sets of
program states to sets of \emph{distributions} over program states.

\begin{definition} \label{def:ppoly}
A \emph{probabilistic polyhedron} $\pp{}$ is a tuple
$(\getpoly{},\smin{},$ $\smax{}, \pmin{},$ $\pmax{}, \mmin{},$ $\mmax{})$.
We write $\ppolys$ for the set of probabilistic polyhedra.  The
quantities $\smin{}$ and $ \smax{}$ are lower and upper bounds on
the number of support points in the concrete distribution(s) $\pp{}$
represents. A support point of a distribution is one
which has non-zero probability.  The quantities
$\pmin{} $ and $ \pmax{} $ are lower and upper bounds on the
probability mass per support point.  The $ \mmin{} $ and $
\mmax{} $ components give bounds on the total probability mass (i.e.,
the sum of the probabilities of all support points).
Thus $\pp{}$ represents the \emph{set} of distributions
$\ppconc{\pp{}}$ defined below.
\[
\ppconc{\pp{}} \defeq
\begin{aligned}[t]
\{\delta \given {} & \nzset{\delta} \subseteq \pconc{\getpoly{}} \wedge {} \\
  & \smin{} \leq \setsize{\nzset{\delta}} \leq \smax{} \wedge {} \\
  & \mmin{} \leq \pmass{\delta} \leq \mmax{} \wedge \\
  & \forall \sigma \in \nzset{\delta}.\ \pmin{} \leq \delta(\sigma) \leq \pmax{}\}
\end{aligned}
\]

We will write $ \dom{\pp{}} \defeq \dom{\poly} $ to denote the set of variables
used in the probabilistic polyhedron.

\end{definition}

% Recall that $\pmass{\delta}$ gives the mass of distribution $\delta$,
% which is the sum of the probabilities ascribed to states by $\delta$.
Note the set $\ppconc{\pp{}}$ is a singleton exactly when $\smin{} =
\smax{} = \psize{\getpoly{}}$ and $\pmin{} = \pmax{}$, and $\mmin{} =
\mmax{}$, where $\psize{\getpoly{}}$ denotes the number
  of discrete points in convex polyhedron $\getpoly{}$.   In such a
case $\ppconc{\pp{}}$ contains only the uniform
distribution where each state in $\pconc{\getpoly{}}$ has probability
$\pmin{}$. In general, however, the concretization of a probabilistic
polyhedron will have an infinite number of distributions, with
per-point probabilities varied somewhere in the range $ \pmin{}
$ and $ \pmax{} $. % The representation of the non-uniform distribution
% in that example is thus approximate, but the security policy can still
% be checked via the $ \pmax{1} $ (and $ \mmax{2} $) properties of the
% probabilistic polyhedron.
%
Distributions represented by a probabilistic polyhedron
are not necessarily normalized.  In general, there is a
relationship between $\pmin{}, \smin{}, $ and $\mmin{}$, in that
$\mmin{} \geq \pmin{} \cdot \smin{}$ (and $\mmax{} \leq \pmax{} \cdot
\smax{}$), and the combination of the three can yield more information
than any two in isolation.
The \emph{abstract semantics} of $\stmt$ is written
$\abspevalp{\stmt}{\ppoly} = \ppoly'$, and indicates that abstractly
interpreting $\stmt$ where the distribution of input states are
approximated by $\ppoly$ will produce $\ppoly'$, which approximates
the distribution of output states. Following standard abstract
interpretation terminology, $\powerset{\dists}$ (sets
of distributions) is the \textit{concrete domain}, $\ppolys$ is the
\textit{abstract domain}, and
$\ppconcfun : \ppolys \rightarrow \powerset{\dists}$ is the
\textit{concretization function} for $\ppolys$. We do not present the
abstract semantics here; details can be found in Mardziel et
al.~\cite{mardziel13belieflong}. Importantly, this abstract semantics
is sound:
\begin{theorem}[Soundness]
  \label{soundness}
  For all $\stmt, \ppoly_1, \ppoly_2, \delta_1, \delta_2$, if
  $\delta_1 \in \ppconc{\ppoly_1}$ and
  $\abspevalp{\stmt}{\ppoly_1} = \ppoly_2$, then
  $\pevalp{\stmt}{\delta_1} = \delta_2$ with
  $\delta_2 \in \ppconc{\ppoly_2}$.
\end{theorem}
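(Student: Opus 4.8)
The plan is to proceed by structural induction on the statement $\stmt$, following the standard recipe for proving soundness of an abstract interpreter. For each syntactic form of $\stmt$, the abstract semantics $\abspevalp{\cdot}{\cdot}$ and the concrete semantics $\pevalp{\cdot}{\cdot}$ are defined by matching clauses, so the proof obligation in each case is a local one: assuming the statement holds for immediate substatements, show that the abstract transfer function for the current construct overapproximates (in the sense of $\ppconcfun$) the concrete transfer function. I would first set up the induction and dispatch the easy base cases, then handle the compound cases, and finally treat the loop case, which I expect to be the crux.

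First I would handle $\sskip$ (the identity on both sides, immediate) and $\sassign{x}{\aexp}$. For assignment, the key is that the abstract operation on a \ppname{} updates the polyhedron $\getpoly{}$ via the usual polyhedral assignment transformer while adjusting the ornaments $\smin{},\smax{},\pmin{},\pmax{},\mmin{},\mmax{}$ appropriately (e.g., a non-injective assignment can merge support points, so $\smin{}$ may have to drop and $\pmax{}$ rise); I would check that for any $\delta \in \ppconc{\ppoly_1}$, the pushforward $\pevalp{\sassign{x}{\aexp}}{\delta}$ still satisfies all four conjuncts defining $\ppconc{\ppoly_2}$. Sequencing $\sseq{\stmt_1}{\stmt_2}$ is then pure transitivity: apply the inductive hypothesis to $\stmt_1$ to land in $\ppconc{\ppoly'}$, then to $\stmt_2$. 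For $\spif{q}{\stmt_1}{\stmt_2}$ and $\sif{\bexp}{\stmt_1}{\stmt_2}$, the concrete semantics splits $\delta$ into two sub-distributions (scaled by $q,1-q$ in the probabilistic case, or restricted to the states satisfying/violating $\bexp$ in the conditional case), runs the branches, and adds the results; correspondingly the abstract semantics splits $\ppoly_1$, recurses, and joins with the abstract addition $\ppplus$ (and, for $\sifk$, uses the abstract conditioning operation on $\bexp$). The obligations here are: (i) the abstract split soundly covers the concrete split — each sub-distribution lands in the concretization of the corresponding abstract piece, which for the conditional requires soundness of abstract conditioning, i.e. that restricting to $\models\bexp$ keeps the distribution inside $\ppconc{}$ of the conditioned \ppname; and (ii) the abstract addition $\ppplus$ is sound, i.e. $\delta_a \in \ppconc{\ppoly_a}$ and $\delta_b \in \ppconc{\ppoly_b}$ imply $\delta_a + \delta_b \in \ppconc{\ppoly_a \ppplus \ppoly_b}$, which amounts to checking that the ornament arithmetic (support sizes can add or, if the polyhedra overlap, combine sub-additively; masses add; per-point probabilities can add where supports overlap) respects the bounds. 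These are the lemmas I would invoke or, if not previously established, state and verify by unfolding Definition~\ref{def:ppoly}.

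The main obstacle is the loop $\swhile{\bexp}{\stmt}$. Concretely its semantics is a least fixed point: $\pevalp{\swhile{\bexp}{\stmt}}{\delta}$ is the limit of iterating ``condition on $\bexp$, run the body, add back the part that has already exited''. Abstractly one computes a fixed point in $\ppolys$, which — because $\ppolys$ has infinite ascending chains (the polyhedral shape alone does) — requires a \emph{widening} operator to guarantee termination. The soundness argument therefore has two parts: (a) the abstract iterates soundly overapproximate the concrete iterates at every finite stage, proved by an inner induction on the iteration count using the IH for the body $\stmt$ together with soundness of conditioning and of $\ppplus$; and (b) passing to the limit is sound, i.e. the concrete lub of the chain of partial distributions is concretized by the post-widening abstract fixed point. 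Part (b) needs that $\ppconcfun$ behaves well with respect to the relevant limits and that the widening is sound (its result overapproximates the join of the arguments); I would appeal to the general abstract-interpretation fact that a post-fixed point of the abstract transformer whose concretization contains the bottom element contains the concrete lfp, once monotonicity/soundness of the one-step abstract transformer is in hand. I would lean on Mardziel et al.~\cite{mardziel13belieflong} for the precise definitions of the abstract loop transformer and widening, since the theorem is essentially restating their result for the domain instances (polyhedra and intervals) used here; the novel content of this paper lives in later sections, so here it suffices to carefully reduce to those established facts.
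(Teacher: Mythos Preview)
Your plan is a correct and standard blueprint for proving soundness of an abstract interpreter by structural induction, and it is in fact broadly what Mardziel et al.\ carry out in~\cite{mardziel13belieflong}. However, the paper's own proof of this theorem is a single sentence: it refers the reader to Theorem~6 of~\cite{mardziel13belieflong}. The statement here is a verbatim restatement of that prior result (the abstract domain, the concretization function $\ppconcfun$, and the abstract semantics $\abspevalp{\cdot}{\cdot}$ are all taken unchanged from that work), so no new argument is required or given.

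You effectively recognize this yourself in your final paragraph, where you say you would ``lean on Mardziel et al.\ \ldots\ since the theorem is essentially restating their result'' and that ``it suffices to carefully reduce to those established facts.'' That reduction \emph{is} the paper's proof. The detailed induction sketch you give is not wrong, but it is reconstructing the content of the cited reference rather than something the present paper proves; the auxiliary lemmas you identify (soundness of abstract conditioning, of abstract addition $\ppplus$, of the loop widening) are exactly the lemmas proved in~\cite{mardziel13belieflong} on the way to their Theorem~6.
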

\begin{proof}
  See Theorem 6 in Mardziel et. al~\cite{mardziel13belieflong}.
\end{proof}

Consider the example from Section~\ref{sec:samp_and_conc}. We assume
the adversary has no prior information about the location of ship
$s$. So, $\delta_1$ above is simply the uniform distribution over all
possible locations. The statement $\stmt$ is the query issued by the
adversary, $\textit{Nearby}(z,L_1,4) \vee
\textit{Nearby}(z,L_2,4)$.\footnote{Appendix~\ref{app:code} shows the
  code, which computes Manhattan distance between $s$ and $L_1$ and $L_2$ and then
  sets an output variable if either distance is within four units.}
If we assume that the result of the query
is \lstinline|true| then the adversary learns that the location of $s$ is within (Manhattan)
distance $4$ of $L_1$ or $L_2$. This posterior belief ($\delta_2$) is represented by
the overlapping diamonds on the bottom-right of Figure~\ref{fig:prob-AI}(b).
The abstract interpretation produces a sound (interval) overapproximation ($\ppoly_2$) of the
posterior belief. This is modeled by the rectangle which surrounds the overlapping diamonds.
This rectangle is the ``join'' of two overlapping boxes, which each correspond to one of the $\textit{Nearby}$
calls in the disjuncts of $\stmt$.

\section{Computing Vulnerability: Basic procedure}
\label{sec:basic}

The key goal of this paper is to quantify the risk to secret
information of running a query over that information. This section explains the basic
approach by which we can use probabilistic polyhedra to compute
\emph{vulnerability}, i.e., the probability of the most probable point
of the posterior distribution. Improvements on this basic approach are
given in the next two sections.

Our convention will be to use $\getpoly{1}$, $\smin{1}$, $\smax{1}$,
etc. for the components associated with probabilistic polyhedron
$\pp{1}$. In the program $\stmt$ of interest, we assume that secret
variables are in the set $T$, so input states are written $\sigma_T$,
and we assume there is a single output variable $r$.  We assume that
the adversary's initial uncertainty about the possible values of the secrets
$T$ is captured by the probabilistic polyhedron $\ppoly_0$ (such that
$\dom{\ppoly_0} \supseteq T$).

Computing vulnerability occurs according to the following procedure.
\begin{enumerate}
\item Perform abstract interpretation: $\abspevalp{\stmt}{\ppoly_0} = \ppoly$
\item Given a concrete output value of interest, $o$, perform abstract
  conditioning to define $\ppoly_{r=o} \defeq (\ppoly \wedge
  r\!=\!o)$.\footnote{We write $\ppoly \wedge B$ and not $\ppoly
    \mid B$
    because $\ppoly$ need not be normalized.}
\end{enumerate}
The vulnerability $V$ is the probability of the most likely state(s).
When a probabilistic polyhedron represents one or more true
distributions (i.e., the probabilities all sum to 1), the most
probable state's probability is bounded by $\pmax{}$. However, the
abstract semantics does not always normalize the probabilistic
polyhedron as it computes, so we need to scale $\pmax{}$ according to
the total probability mass. To ensure that our estimate is on the safe
side, we scale $\pmax{}$ using the \emph{minimum} probability mass:
$V = \frac{\pmax{}}{\mmin{}}$. In Figure~\ref{fig:prob-AI}(b), the sound
approximation in the top-right has $V \leq \frac{0.02}{0.55} = 0.036$ and the most
precise approximation in the bottom-right has $V \leq \frac{0.02}{0.77} = 0.026$.

% To compute the information leakage of a computation using min entropy,
% we can compute min entropy of $\ppoly_0$ as $\mathrm{log}_2 z_0$ and
% the min entropy of $\ppoly_T \defeq \project{\ppoly_{r=o}}{T}$ as
% $\mathrm{log}_2 z_T$. The information leakage is the difference
% between these two, i.e., $\mathrm{log}_2 z_0 - \mathrm{log}_2 z_T$.

\section{Improving precision with sampling}
\label{sec:sampling}

We can improve the precision of the basic procedure using
sampling. First we introduce some notational convenience:

\begin{align*}
\ppoly_{T} &\defeq \project{\dcond{\ppoly}{(r = o)}}{T} \\
\ppoly_{T+} &\defeq \ppoly_{T} \text{ revised polyhedron with confidence }\omega
\end{align*}

$\ppoly_{T}$ is equivalent to step 2, above, but projected onto the set
of secret variables $T$. $\ppoly_{T+}$ is the improved (via sampling)
polyhedron.% \footnote{We also use the same notation for polyhedra which are
% improved via concolic execution and the combination of concolic execution and
% sampling. The meaning should always be clear from the context.}

After computing $\ppoly_T$ with the basic procedure from the previous section we
take the following additional steps:
\begin{enumerate}
\item Set counters $\alpha$ and $\beta$ to zero.
\item Do the following $N$ times (for some $N$, see below):
\begin{enumerate}
\item Randomly select an input state
  $\sigma_T \in \pconc{\getpoly{T}}$.
\item \label{step:sample} ``Run'' the program by computing
  $\pevalp{\stmt}{\dot{\sigma_T}} = \delta$.
  If there exists $\sigma \in \nzset{\delta}$ with $\sigma(r) = o$
  then increment $\alpha$, else increment $\beta$.
\end{enumerate}
\item We can interpret $\alpha$ and $\beta$ as the parameters of a
  Beta distribution of the likelihood that an arbitrary state in
  $\pconc{\getpoly{T}}$ is in the support of the true distribution.
  From these parameters we can
  compute the \emph{credible interval} $[p_L,p_U]$ within which is
  contained the true likelihood, with
  confidence $\omega$ (where $0 \leq \omega \leq 1$). (A credible
  interval is essentially a Bayesian analogue of a confidence
  interval.)
 In general, obtaining a higher confidence or a narrower interval will require a higher $N$.
  Let result
  $\ppoly_{T+} = \ppoly_T$ except that
  $\smin{T+} = p_L \cdot \psize{\getpoly{T}}$ and
  $\smax{T+} = p_U \cdot \psize{\getpoly{T}}$ (assuming these improve
  on $\smin{T}$ and $\smax{T}$).
  We can then propagate these improvements to $\mmin{}$ and $\mmax{}$
  by defining $\mmin{T+} = \pmin{T} \cdot \smin{T+}$ and
  $\mmax{T+} = \pmax{T} \cdot \smax{T+}$. Note that if $\mmin{T} >
  \mmin{T+}$ we leave it unchanged, and do likewise if $\mmax{T} <
  \mmax{T+}$.
\end{enumerate}
At this point we can compute the vulnerability as in the basic
procedure, but using $\ppoly_{T+}$ instead of $\ppoly_T$.

Consider the example of Section~\ref{sec:samp_and_conc}. In Figure~\ref{fig:prob-AI}(b),
we draw samples from the rectangle in the top-right. This rectangle overapproximates
the set of locations where $s$ might be, given that the query returned \lstinline{true}. We
sample locations from this rectangle and run the query on each sample. The green (red) dots
indicate \lstinline{true} (\lstinline{false}) results,
which are added to $\alpha$ ($\beta$). After
sampling $N = 1000$ locations, we have $\alpha = 570$ and $\beta = 430$.
Choosing $\omega = .9$ (90\%), we compute the credible interval $[0.53, 0.60]$. With $\psize{\getpoly{T}} = 135$,
we compute $[\smin{T+},\smax{T+}]$ as $[0.53 \cdot 135, 0.60 \cdot 135] = [72,81]$.

There are several things to notice about this procedure. First,
observe that in step~\ref{step:sample} we ``run'' the program using
the point distribution $\dot{\sigma}$ as an input; in the case that
$\stmt$ is deterministic (has no $\spifk$ statements) the output
distribution will also be a point distribution. However, for programs
with $\spifk$ statements there are multiple possible outputs depending
on which branch is taken by a $\spifk$. We consider all of these
outputs so that we can confidently determine whether the input state
$\sigma$ could ever cause $\stmt$ to produce result $o$. If so, then
$\sigma$ should be considered part of $\ppoly_{T+}$. If not, then we
can safely rule it out (i.e., it is part of the overapproximation).
%MWH: We already said these things, above, so no need to say them
%again, one para later.
% Figure~\ref{fig:prob-AI}, $\alpha$ and $\beta$ are represented by $in = 570$
% and $out = 430$ respectively, indicating that $570$ ($430$) points produced
% the result true (false).

Second, we only update the size parameters of $\ppoly_{T+}$; we make no changes
to $\pmin{T+}$ and $\pmax{T+}$. This is because our sampling procedure only
determines whether it is \emph{possible} for an input state to produce the
expected output. The probability that an input state produces an output state
is already captured (soundly) by $p_T$ so we do not change that.
This is useful because the approximation of $p_T$ does not degrade with the use
of the interval domain in the way the approximation of the size degrades (as
illustrated in Figure~\ref{fig:prob-AI}(b)). Using sampling is an attempt to
regain the precision lost on the size component (only). % The size component update
% is represented in Figure~\ref{fig:prob-AI} as the new ornament $s \in [72, 81]$.

Finally, the confidence we have that sampling has accurately assessed
which input states are in the support is orthogonal to the probability of
any given state. In particular, $\ppoly_T$ is an abstraction of a
distribution $\delta_T$, which is a mathematical object. Confidence
$\omega$ is a measure of how likely it is that our abstraction (or, at
least, the size part of it) is accurate.

We prove (in Appendix~\ref{sec:proofs}) that our sampling procedure is sound:
\begin{theorem}[Sampling is Sound] \hfill \\
  \label{sampling_proof}
  If $\delta_0 \in \ppconc{\ppoly_0}$, $\abspevalp{\stmt}{\ppoly_0} =
  \ppoly$, and $\pevalp{\stmt}{\delta_0} = \delta$ then
  $\delta_{T} \in \ppconc{\ppoly_{T+}} \text{ with confidence } \omega$
  where
  \begin{align*}
  \delta_{T} &\defeq \project{\dcond{\delta}{(r = o)}}{T} \\
  \ppoly_{T} &\defeq \project{\dcond{\ppoly}{(r = o)}}{T} \\
  \ppoly_{T+} &\defeq \ppoly_{T} \text{ sampling revised with confidence }\omega.
  \end{align*}
  %
%  Given
%\begin{itemize}
%\item $\abspevalp{\stmt}{\ppoly_0} = \ppoly$
%\item  $\ppoly_{r=o} \defeq (\ppoly \mid r = o)$
%\item  $\ppoly_T \defeq \project{\ppoly_{r=o}}{T}$, and $\ppoly_{T+}$ is
%  computed from $\ppoly_T$ by the above procedure to confidence $c$.
%\end{itemize}
%Then for all $\delta_0, \delta$, if
%  $\delta_0 \in \ppconc{\ppoly_0}$ and
%  $\pevalp{\stmt}{\delta_0} = \delta$ and
%  $\exists \sigma. \sigma \in \nzset{\delta} $ and $\sigma(r) = o$ then
%  % $\delta_{r=o} = (\delta \mid r=o)$ then
%  $\project{\delta}{T} \in \ppconc{\ppoly_{T+}}$ with probability $c$.
\end{theorem}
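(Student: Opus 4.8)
The plan is to decompose the claim into a "deterministic soundness" part and a "statistical correctness" part, and glue them together. First, observe that Theorem~\ref{soundness} already gives us $\delta \in \ppconc{\ppoly}$ from the hypotheses $\delta_0 \in \ppconc{\ppoly_0}$ and $\abspevalp{\stmt}{\ppoly_0} = \ppoly$. Since abstract conditioning and projection are sound operations on probabilistic polyhedra (this is part of the correctness of the abstract semantics in Mardziel et al.), it follows that $\delta_T \in \ppconc{\ppoly_T}$ unconditionally. So the entire content of the theorem is showing that the \emph{revised} polyhedron $\ppoly_{T+}$ still soundly concretizes $\delta_T$, but now only "with confidence $\omega$" because the revision tightens the support-size ornaments $\smin{}, \smax{}$ (and the derived $\mmin{}, \mmax{}$) using a statistical estimate rather than a proof.

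The key step is to characterize exactly what random quantity the Beta-distribution reasoning is estimating. Let $p^\star \defeq \setsize{\nzset{\delta_T}} / \psize{\getpoly{T}}$ be the true fraction of points of the concrete shape $\getpoly{T}$ that lie in the support of $\delta_T$. I would argue that each sampling trial in step~\ref{step:sample} is a Bernoulli trial that succeeds (increments $\alpha$) exactly when the sampled $\sigma_T$ is in $\nzset{\delta_T}$ — this is where the "run all $\spifk$ branches" detail matters: computing $\pevalp{\stmt}{\dot{\sigma_T}} = \delta$ and checking whether some $\sigma \in \nzset{\delta}$ has $\sigma(r) = o$ is precisely testing membership of $\sigma_T$ in the support of the conditioned, projected output distribution, because $\sigma_T$ contributes positive mass to the $r{=}o$ slice iff \emph{some} probabilistic path from $\sigma_T$ yields $r = o$. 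Hence $\alpha \sim \mathrm{Binomial}(N, p^\star)$ and the standard Bayesian credible-interval construction (uniform prior on $p^\star$, posterior $\mathrm{Beta}(\alpha+1,\beta+1)$ or the variant used) gives $p^\star \in [p_L, p_U]$ with probability $\omega$. Multiplying through by $\psize{\getpoly{T}}$ gives $\setsize{\nzset{\delta_T}} \in [\smin{T+}, \smax{T+}]$ with confidence $\omega$; combined with $\pmin{T} \le \delta_T(\sigma) \le \pmax{T}$ for all support points (inherited from $\ppoly_T$, unchanged by the revision), the mass bounds $\mmin{T+} = \pmin{T}\cdot\smin{T+} \le \pmass{\delta_T} \le \pmax{T}\cdot\smax{T+} = \mmax{T+}$ follow. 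Taking the pointwise max/min with the old ornaments (the "leave unchanged if it would weaken" clauses) only ever keeps a sound bound sound, so $\delta_T \in \ppconc{\ppoly_{T+}}$ with confidence $\omega$.

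I would organize the write-up as: (i) invoke Theorem~\ref{soundness} and soundness of conditioning/projection to get $\delta_T \in \ppconc{\ppoly_T}$; (ii) a lemma that the sampling oracle of step~\ref{step:sample} returns "success" iff $\sigma_T \in \nzset{\delta_T}$, so the $N$ trials are i.i.d.\ $\mathrm{Bernoulli}(p^\star)$; (iii) cite the correctness of the Beta/credible-interval step to conclude $p^\star \in [p_L,p_U]$ with probability $\omega$; (iv) push this through the definitions of $\smin{T+},\smax{T+},\mmin{T+},\mmax{T+}$ and verify each conjunct of the concretization $\ppconc{\cdot}$, handling the "don't weaken existing bounds" cases by monotonicity. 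The main obstacle I anticipate is step (ii): making precise the equivalence between "$\sigma_T$ is in the support of $\project{\dcond{\delta}{(r=o)}}{T}$" and "the program, run from $\dot{\sigma_T}$, puts positive probability on some state with $r = o$," which requires unfolding the denotational semantics of $\spifk$ (a convex combination of sub-distributions) and of conditioning and projection, and checking that positivity is preserved and reflected in exactly the right direction — in particular that projection onto $T$ does not accidentally merge a support point with a non-support point in a way that breaks the count. A secondary subtlety worth a remark is that the confidence $\omega$ is a statement about the sampling procedure's output over its internal randomness, not about $\delta_T$, which is a fixed object; I would state this explicitly to avoid any impression that we are randomizing the secret.
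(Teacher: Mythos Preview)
Your proposal is correct and follows essentially the same route as the paper: invoke soundness of abstract interpretation plus conditioning/projection to get $\delta_T \in \ppconc{\ppoly_T}$, identify $p^\star = \setsize{\nzset{\delta_T}}/\psize{\getpoly{T}}$ as the Bernoulli parameter, use the credible interval to bound it, and push through to the $s$ and $m$ ornaments with a case split for the ``don't weaken'' clauses. If anything you are more careful than the paper, which simply asserts the credible-interval fact without your step~(ii) lemma that the sampling oracle succeeds iff $\sigma_T \in \nzset{\delta_T}$; that lemma is worth keeping, as it is exactly where the $\spifk$ semantics and the ``check all outputs'' detail are used.
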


\section{Improving precision with concolic execution}
\label{sec:concolic}

Another approach to improving the precision of a probabilistic
polyhedron $\ppoly$ is to use concolic execution. The idea here is to
``magnify'' the impact of a single sample to soundly increase
$\smin{}$ by considering its execution \emph{symbolically}. More
precisely, we concretely execute a program using a particular secret
value, but maintain symbolic constraints about how that value is
used. This is referred to as \emph{concolic} execution~\cite{Sen:2005:CCU:1081706.1081750}. We use the
collected constraints to identify all points that would induce the
same execution path, which we can include as part of $\smin{}$.

We begin by defining the semantics of concolic execution, and then
show how it can be used to increase $\smin{}$ soundly.

\subsection{(Probabilistic) Concolic Execution}

Concolic execution is expressed as rewrite rules defining a judgment
$\config{\stmt} \steps[p]_\pi \config[\Pi']{\stmt'}$.  Here, $\Pi$ is
pair consisting of a concrete state $\sigma$ and symbolic state
$\zeta$. The latter maps variables $x \in \vars$ to \emph{symbolic
  expressions} $\mathcal{E}$ which extend expressions $E$ with
\emph{symbolic variables} $\alpha$. This judgment indicates that under
input state $\Pi$ the statement $\stmt$ reduces to statement $\stmt'$
and output state $\Pi'$ with probability $p$, with \emph{path
  condition} $\pi$.  The path condition is a conjunction of boolean
symbolic expressions $\mathcal{B}$ (which are just boolean expressions
$\bexp$ but altered to use symbolic expressions $\mathcal{E}$ instead
of expressions $E$) that record which branch is taken during
execution. For brevity, we omit $\pi$ in a rule when it is
$\mathsf{true}$.

The rules for the concolic semantics are given in
Figure~\ref{fig:symbolic-semantics}. Most of these are standard, and
deterministic (the probability annotation $p$ is $1$). Path conditions
are recorded for $\sifk$ and $\mathsf{while}$, depending on the branch
taken. The semantics of $\spif{q}{\stmt_1}{\stmt_2}$ is
non-deterministic: the result is that of $\stmt_1$ with probability
$q$, and $\stmt_2$ with probability $1 - q$. We write $\zeta(B)$ to
substitute free variables $x \in B$ with their mapped-to values
$\zeta(x)$ and then simplify the result as much as possible. For
example, if $\zeta(x) = \alpha$ and $\zeta(y) = 2$, then
$\zeta(x > y + 3) = \alpha > 5$. The same goes for $\zeta(E)$.

We define a \emph{complete run} of the concolic semantics with the
judgment $\config{\stmt} \Downarrow^p_{\pi} \Pi'$, which has two rules:
\[
\begin{array}{cc}
\config{\sskip} \Downarrow^1_{\mathsf{true}} \Pi \\ \\
\config{\stmt} \steps[p]_{\pi} \config[\Pi']{\stmt'} \quad
\config[\Pi']{\stmt'} \Downarrow^q_{\pi'} \Pi'' \\ \hline
\config{\stmt} \Downarrow^{p\cdot q}_{\pi \wedge \pi'} \Pi'' \\
\end{array}
\]
A complete run's probability is thus the product of the probability of
each individual step taken. The run's path condition is the
conjunction of the conditions of each step.

The path condition $\pi$ for a complete run is a conjunction of the
(symbolic) boolean guards evaluated during an execution. $\pi$ can be
converted to disjunctive normal form (DNF), and given the restrictions
of the language the result is essentially a set of convex polyhedra
over symbolic variables $\alpha$.

% TODO: Not sure what to do about this one.

\begin{figure}[t]
\[
\begin{array}{lr}
\multicolumn{2}{l}{\config[(\sigma,\zeta)]{\sassign{x}{\aexp}} \steps %
\config[(\esubst{\sigma}{x}{\sigma(\aexp)},\esubst{\zeta}{x}{\zeta(\aexp)})]{\sskip}} \\
\config[(\sigma,\zeta)]{\sif{\bexp}{\stmt_1}{\stmt_2}} \steps_{\zeta(B)} \config[(\sigma,\zeta)]{\stmt_1} & \text{if
  }\sigma(\bexp) \\
\config[(\sigma,\zeta)]{\sif{\bexp}{\stmt_1}{\stmt_2}} \steps_{\zeta(\aneg{\bexp})} \config[(\sigma,\zeta)]{\stmt_2} & \text{if
  }\sigma(\aneg{\bexp}) \\
\config{\spif{q}{\stmt_1}{\stmt_2}} \steps[q] \config{\stmt_1}\\
\config{\spif{q}{\stmt_1}{\stmt_2}} \steps[1\!-\!q] \config{\stmt_2}\\
\multicolumn{2}{l}{\config{\sseq{\stmt_1}{\stmt_2}} \steps_\pi
  \config[\Pi']{\sseq{\stmt_1'}{\stmt_2}} \quad
\text{if }\config{\stmt_1} \steps_\pi \config[\Pi']{\stmt_1'}} \\
\config{\sseq{\sskip}{\stmt}} \steps \config{\stmt}\\
\config{\swhile{\bexp}{\stmt}} \steps_{\zeta(\bexp)} \config{\sseq{\stmt}{\swhile{\bexp}{\stmt}}} & \text{if }\sigma(\bexp)\\
\config{\swhile{\bexp}{\stmt}} \steps_{\zeta(\aneg{\bexp})} \config{\sskip} & \text{if }\sigma(\aneg{\bexp})\\
\end{array}
\]
\caption{Concolic semantics}
\label{fig:symbolic-semantics}
\end{figure}

% Note: Sometimes we consider states with domains restricted to a subset
% of variables $V$, in which case we write
% $\sigma_V \in \states_V \defeq V \rightarrow \Integer$.  % We may also
% % \emph{project} states to a set of variables $V$:
% % \[\project{\sigma}{V} \defeq \lambda x \in \vars_V \lsep \sigma(x)\]

% We can relate this symbolic semantics and the standard
% semantics.
% Define the probability $Pr(\sigma, \stmt, \sigma')$ as the
% sum $\Sigma p$ of the probabilities of all runs
% $\config{\stmt_0} \Downarrow^p \sigma'$ (where each distinct
% derivation is a distinct run).
% \begin{lemma}
% Suppose $Pr(\sigma, \stmt, \sigma') = q$. Then
% $\pevalp{\stmt}{\dot{\sigma}} = \delta$ implies that $\delta(\sigma')
% = q$.
% \end{lemma}

\subsection{Improving precision}

Using concolic execution, we can improve our estimate of the size of a
probabilistic polyhedron as follows:
\begin{enumerate}
\item  Randomly select an input state  $\sigma_T \in
  \pconc{\getpoly{T}}$ (recall that $\getpoly{T}$ is the polyhedron describing
  the possible valuations of secrets $T$).
\item Set $\Pi = (\sigma_T,\zeta_T)$ where $\zeta_T$ maps each
  variable $x \in T$ to a fresh symbolic variable $\alpha_x$. Perform
  a complete concolic run
  $\config[\Pi]{\stmt} \Downarrow^p_\pi (\sigma',\zeta')$. Make sure
  that $\sigma'(r) = o$, i.e., the expected output. If not, select a new
  $\sigma_T$ and retry. Give up after some number of failures $N$.
  For our example shown in Figure~\ref{fig:prob-AI}(b), we might obtain a
  path condition $|Loc(z).x - L_1.x| + |Loc(z).y - L_1.y| \leq 4$ that
  captures the left diamond of the disjunctive query.
\item After a successful concolic run, convert path condition $\pi$ to
  DNF, where each conjunctive
  clause is a polyhedron $\getpoly{i}$. Also convert uses of
  disequality ($\leq$ and $\geq$) to be strict ($<$ and
  $>$).
\item Let $\getpoly{} = \getpoly{T} \sqcap (\bigsqcup_i \getpoly{i}$);
  that is, it is the join of each of the polyhedra in $DNF(\pi)$
  ``intersected'' with the original constraints. This captures all of the
  points that could possibly lead to the observed outcome along the
  concolically executed path.  Compute $n = \psize{\getpoly{}}$.  Let
  $\ppoly_{T+} = \ppoly_T$ except define $\smin{T+} = n$ if $\smin{T} <
  n$ and $\mmin{T+} = \pmin{T} \cdot n$ if $\mmin{T} < \pmin{T} \cdot
  n$. (Leave them as is, otherwise.) For our example, $n = 41$, the
  size of the left diamond. We do not update
  $\smin{T}$ since 41 < 55, the probabilistic polyhedron's lower
  bound (but see below).
\end{enumerate}

\begin{theorem}[Concolic Execution is Sound] \hfill \\
  If $\delta_0 \in \ppconc{\ppoly_0}$, $\abspevalp{\stmt}{\ppoly_0} = \ppoly$, and $\pevalp{\stmt}{\delta_0} = \delta$ then
$\delta_{T} \in \ppconc{\ppoly_{T+}} $
  where
  \begin{align*}
  \delta_{T} &\defeq \project{\dcond{\delta}{(r = o)}}{T} \\
  \ppoly_{T} &\defeq \project{\dcond{\ppoly}{(r = o)}}{T} \\
  \ppoly_{T+} &\defeq \ppoly_{T} \text{ concolically revised.}
  \end{align*}
\end{theorem}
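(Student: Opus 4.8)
The plan is to show that the concolic revision only modifies $\smin{}$ and $\mmin{}$, and that these modifications remain sound lower bounds for the projected conditioned distribution $\delta_T$. Since $\ppoly_{T+}$ agrees with $\ppoly_T$ on the shape $\getpoly{T}$, on $\smax{}$, $\mmax{}$, and on $\pmin{}$, $\pmax{}$, and since Theorem~\ref{soundness} already gives $\delta_T \in \ppconc{\ppoly_T}$ (by soundness of abstract interpretation followed by soundness of abstract conditioning and projection, exactly as in the setup of Theorem~\ref{sampling_proof}), it suffices to verify the single new inequality $\smin{T+} \leq \setsize{\nzset{\delta_T}}$ and the derived inequality $\mmin{T+} \leq \pmass{\delta_T}$. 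If $\smin{T+} = \smin{T}$ (the ``leave as is'' case) there is nothing to prove, so assume $\smin{T+} = n = \psize{\getpoly{}}$ where $\getpoly{} = \getpoly{T} \sqcap (\bigsqcup_i \getpoly{i})$.

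First I would establish the key geometric claim: every integer point $\polypoint{\sigma} \in \pconc{\getpoly{}}$, viewed as a valuation of the symbolic variables $\alpha_x$, is a support point of $\delta_T$. The argument is that $\getpoly{}$ is the intersection of $\getpoly{T}$ with the DNF of the path condition $\pi$ obtained from the successful concolic run $\config[\Pi]{\stmt} \Downarrow^p_\pi (\sigma',\zeta')$ with $\sigma'(r) = o$. A point satisfying one of the disjuncts $\getpoly{i}$ satisfies all the (symbolic) branch guards recorded along that path, so re-running the concrete semantics from that input state follows the very same control-flow path and therefore produces the same output value $r = o$ with the same (positive) path probability $p > 0$. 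Hence that input state lies in $\nzset{\delta}$ conditioned on $r=o$, and after projecting onto $T$ it lies in $\nzset{\delta_T}$. (The strictification of $\leq$/$\geq$ to $<$/$>$ in step~3 is what makes this an \emph{under}approximation: it conservatively drops boundary points whose status might be ambiguous, which only helps soundness.) Since distinct integer points project to distinct elements of $T$-space (the conditioning and projection do not identify points), we get $\setsize{\nzset{\delta_T}} \geq \psize{\getpoly{}} = n = \smin{T+}$.

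Next I would propagate this to the mass bound: each support point of $\delta_T$ has probability at least $\pmin{T} = \pmin{T+}$ (unchanged, and sound by $\delta_T \in \ppconc{\ppoly_T}$), and there are at least $\smin{T+}$ of them, so $\pmass{\delta_T} = \sum_{\sigma \in \nzset{\delta_T}} \delta_T(\sigma) \geq \pmin{T+} \cdot \smin{T+} = \mmin{T+}$; and if the procedure instead kept the old $\mmin{T}$, that was already sound. All other conjuncts in the definition of $\ppconc{\ppoly_{T+}}$ coincide with those for $\ppoly_T$ and hence hold. Note that, unlike Theorem~\ref{sampling_proof}, there is no ``with confidence $\omega$'' caveat: the bound is exact because model counting $\psize{\getpoly{}}$ is exact and every counted point is provably a genuine support point.

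The main obstacle I anticipate is making the ``same path $\Rightarrow$ same output'' step fully rigorous — in particular, arguing that the symbolic state $\zeta'$ faithfully tracks the concrete computation, so that a valuation satisfying $\pi$ genuinely drives execution down the recorded path (a standard but slightly fiddly induction on the length of the concolic run, relating $\sigma(E)$ and $\zeta(E)$ under the substitution sending each $\alpha_x$ to the chosen value of $x$), and handling $\spifk$ correctly: a probabilistic branch contributes a factor $q$ or $1-q$ to $p$ but imposes no constraint on the secrets, so it does not shrink $\getpoly{}$, yet the re-run still reaches output $o$ along \emph{that} branch with positive probability, which is all that is needed to conclude the input state is in the support. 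A secondary subtlety is confirming that projection $\project{\cdot}{T}$ does not merge two distinct counted points into one; this follows because the counted points already differ in their $T$-coordinates (they are points of $\getpoly{} \sqsubseteq \getpoly{T}$, whose domain is $T$).
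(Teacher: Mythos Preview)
Your proposal is correct and follows essentially the same structure as the paper's proof: reduce to the four clauses of Definition~\ref{def:ppoly}, observe that (1) and (4) are inherited unchanged from $\ppoly_T$, establish (2) via the key claim that every point of $\getpoly{} = \getpoly{T} \sqcap (\bigsqcup_i \getpoly{i})$ lies in $\nzset{\delta_T}$, and derive (3) from (2) and the unchanged $\pmin{}$ exactly as in Theorem~\ref{sampling_proof}. The paper in fact states the underapproximation inclusion $\{\sigma \mid \sigma \in \getpoly{T} \wedge \sigma \models \pi\} \subseteq \nzset{\delta_T}$ more tersely than you do---it simply asserts it ``from the definition of concolic execution'' without further argument---so your sketch of the ``same path $\Rightarrow$ same output'' induction, your handling of $\spifk$, and your check that projection does not merge counted points are all more careful than the paper's own treatment of this step.
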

The proof is in Appendix~\ref{sec:proofs}.

\subsection{Combining Sampling with Concolic Execution}
\label{sec:comb}

Sampling can be used to further augment the results of concolic execution.
The key insight is that the presence of a sound under-approximation
generated by the concolic execution means that it is unnecessary to
sample from the under-approximating region. Here is the algorithm:

\begin{enumerate}
\item Let $\getpoly{} = \getpoly{0} \sqcap (\bigsqcup_i \getpoly{i})$ be the
  under-approximating region.
\item Perform sampling per the algorithm in
  Section~\ref{sec:sampling}, but with two changes:
\begin{itemize}
\item if a sampled state $\sigma_T \in \pconc{\getpoly{}}$, ignore it
\item When done sampling, compute $\smin{T+} = p_L \cdot
  (\psize{\getpoly{T}} - \psize{\getpoly{}}) + \psize{\getpoly{}}$ and
  $\smax{T+} = p_U \cdot (\psize{\getpoly{T}} - \psize{\getpoly{}}) +
  \psize{\getpoly{}}$. This differs from Section~\ref{sec:sampling} in not
  including the count from concolic region $\getpoly{}$ in the
  computation. This is because, since we ignored samples $\sigma_T \in
  \pconc{\getpoly{}}$, the credible interval $[p_L,p_U]$ bounds
  the likelihood that any given point in $\getpoly{T} \setminus \getpoly{}$ is in the support of
  the true distribution.
\end{itemize}
\end{enumerate}
For our example, concolic execution indicated there are at least
41 points that satisfy the query. With this in hand, and using the same
samples as shown in Section~\ref{sec:sampling}, we can refine $s \in [74,80]$ and
$m \in [0.74,0.160]$ (the credible interval is formed over only those samples which satisfy
the query but fall outside the under-approximation returned by concolic execution). We improve
the vulnerability estimate to $V \leq \frac{0.02}{0.0.74} = 0.027$. These bounds (and vulnerability
estimate) are better than those of sampling alone ($s \in [72,81]$ with $V \leq 0.028$).

% This procedure is more precise than the na\"ive composition of the under-approximating and sampling
% procedures shown above. % In particular, we use the sampling procedure to estimate only the number
% of points which are in the support of $\delta_T$ but which are not already accounted for by $\getpoly{}$.
% This is achieved by performing a uniform sample from the difference of $\getpoly{T}$ and $\getpoly{}$
% (i.e. the space surrounding the under-approximating region).
The statement of soundness and its proof
can be found in Appendix~\ref{sec:dist-semantics}.

\section{Implementation}\label{sec:impl}

% The major contributions of this work are improvements to the
% efficiency and precision of the analysis presented
% in~\cite{mardziel13belieflong}. This is achieved through concolic
% execution and statistical sampling, as well as a combination
% thereof. Both approaches work by refining the metadata in the result
% of the abstract interpretation. The concolic execution computes a path
% condition and then uses volume computation over the path condition to
% generate a lower bound. The statistical sampling performs a simple
% random sample over the result of analysis, and uses a beta
% distribution to estimate a lower bound according to a prescribed
% confidence.

We have implemented our approach as an extension of Mardziel et
al.~\cite{mardziel13belieflong}, which is written in OCaml. This
baseline implements numeric domains $\getpoly{}$ via
an OCaml interface to the Parma Polyhedra
Library~\cite{Bagnara:2008:PPL:1385689.1385711}. The counting procedure
$\psize{\getpoly{}}$ is implemented by LattE~\cite{latte}. Support for
arbitrary precision and exact arithmetic (e.g., for manipulating
$\mmin{}$, $\pmin{}$, etc.) is provided by the \texttt{mlgmp} OCaml
interface to the GNU Multi Precision Arithmetic library. Rather than maintaining a single probabilistic
polyhedron $\ppoly$, the implementation maintains a \emph{powerset} of
polyhedra~\cite{bagnara06powerset}, i.e., a finite disjunction. Doing
so results in a more precise handling of join points in the control flow,
at a somewhat higher performance cost.

We have implemented our extensions to this baseline for the case that
domain $\getpoly{}$ is the interval numeric domain~\cite{cousot76static}. Of
course, the theory fully applies to any numeric abstract domain. We
use Gibbs sampling, which we implemented ourselves. We delegate the
calculation of the beta distribution and its corresponding credible
interval to the \texttt{cephes} OCaml library, which in turns uses the
GNU Scientific Library. It is straightforward to lift the various
operations we have described to the powerset domain. All of our code
is available at \url{https://github.com/GaloisInc/TAMBA}.

\section{Experiments}
\label{sec:exp}

To evaluate the benefits of our techniques, we applied them to queries
based on the evacuation problem outlined in
Section~\ref{sec:overview}. We found that while the baseline technique
can yield precise answers when computing vulnerability, our new
techniques can achieve close to the same level of precision far more
efficiently.

\subsection{Experimental Setup}

For our experiments we analyzed queries similar to
$\mathit{Nearby}(s,l,d)$ from Figure~\ref{fig:bin-search}. We generalize
the \textit{Nearby} query to accept a set of locations $L$---the
query returns true if $s$ is within $d$ units of any one of the
islands having location $l \in L$. In our experiments we fix
$d = 100$. We consider the secrecy of the location of $s$, $\mathit{Location}(s)$.
We also analyze the execution of the resource allocation
algorithm of Figure~\ref{fig:bin-search} directly; we discuss this in
Section~\ref{sec:evac}.

We measure the time it takes to compute the \emph{vulnerability} (i.e.,
the probability of the most probable point) following each query. In our experiments, we
consider a single ship $s$ and set its coordinates so that it is always in
range of some island in $L$, so that the concrete query result returns
\lstinline{true} (i.e. $\mathit{Nearby}(s, L, 100) = true$). We measure the vulnerability following this query
result starting from a prior belief that the coordinates of $s$ are
uniformly distributed with
$0 \leq \mathrm{Location}(s).x \leq 1000$ and
$0 \leq \mathrm{Location}(s).y \leq 1000$.

In our experiments, we varied several experimental parameters:
\emph{analysis method} (either P, I, CE, S, or CE+S), \emph{query
  complexity} $c$; \emph{AI precision
  level} $p$; and \emph{number of samples} $n$. We describe each in
turn.

\paragraph*{Analysis method}
We compared five techniques for computing vulnerability:
\vspace*{-.1in}
\begin{description}
\item[\textbf{P}:] Abstract interpretation (AI) with convex
  polyhedra for domain $\getpoly{}$ (Section~\ref{sec:basic}),
\item[\textbf{I}:] AI with intervals for $\getpoly{}$ (Section~\ref{sec:basic}),
\item[\textbf{S}:] AI with intervals augmented with sampling (Section~\ref{sec:sampling}),
\item[\textbf{CE}:] AI with intervals augmented with concolic execution
  (Section~\ref{sec:concolic}), and
\item[\textbf{CE+S}:] AI with intervals augmented with both techniques
  (Section~\ref{sec:comb})
\end{description}
The first two techniques are due to Mardziel et
al.~\cite{mardziel13belieflong}, where the former uses convex
polyhedra and the latter uses intervals (aka boxes) for the underlying
polygons. In our experiments we tend to focus on P since I's precision
is unacceptably poor (e.g., often vulnerability = 1).

% We considered the following parameters. First, the complexity of the
% query which corresponds directly to the number of islands in the
% $Nearby(d)$ query described above. For example, complexity $= 2$ means
% that $Nearby(d)$ will check if the ship is within $d$ units of two
% islands. We considered a complexity range of $1 - 5$. When we increase
% complexity from $c$ to $c + 1$, we ensure that the islands (defined by
% their coordinates) considered at complexity $c + 1$ are a superset of
% the islands considered at complexity $c$.

% \begin{figure}[h]
%   \begin{center}
%   \begin{tabular}{|l|c|}
%     \hline
%     Complexity & Coordinates \\ \hline
%     1 & $(200, 300)$ \\
%     2 & $(500, 300)$ \\
%     3 & $(800, 300)$ \\
%     4 & $(300, 700)$ \\
%     5 & $(700, 700)$ \\
%     \hline
%   \end{tabular}
%   \end{center}
%   \caption{Positions of each island as complexity is increased, for $Nearby(s,L,d)$ experiments.}
%   \label{fig:complexity_desc}
% \end{figure}

\textit{Query complexity.}
We consider queries with different $L$; we say we are increasing the
\emph{complexity} of the query as $L$ gets larger. Let $c = |L|$; we
consider $1 \leq c \leq 5$, where larger $L$ include the same
locations as smaller ones. We set each location to be at
least $2 \cdot d$ Manhattan distance units away from any other island
(so diamonds like those in Figure~\ref{fig:prob-AI}(a) never
overlap).

\textit{Precision.}
The precision parameter $p$ bounds the size of the powerset abstract
domain at all points during abstract interpretation. This has the
effect of forcing joins when the powerset grows larger than the
specified precision. As $p$ grows larger, the results of
abstract interpretation are likely to become more precise
(i.e. vulnerability gets closer to the true value). We considered
$p$ values of $1$, $2$, $4$, $8$, $16$, $32$, and $64$.

\textit{Samples taken.}
For the latter three analysis methods, we varied the number of samples
taken $n$. For analysis CE, $n$ is
interpreted as the number of samples to try per polyhedron before
giving up trying to find a ``valid sample.''\footnote{This is the $N$
  parameter from section~\ref{sec:concolic}.} For analysis S,
$n$ is the number of samples, distributed
proportionally across all the polyhedra in the powerset.  For analysis
CE+S, $n$ is the combination of the two. We considered sample size
values of $1,000 - 50,000$ in increments of $1,000$. We always compute
an interval with $\omega=$99.9\% confidence  (which will be wider when fewer samples
are used).

\textit{System description.}
 We ran experiments varying all
possible parameters. For each run, we measured the total execution
time (wall clock) in seconds to analyze the query and compute vulnerability. All
experiments were carried out on a MacBook Air with OSX version
10.11.6, a 1.7GHz Intel Core i7, and 8GB of RAM. We ran a single trial for each
configuration of parameters. Only wall-clock time varies across
trials; informally, we observed time variations to be small.

\subsection{Results}

Figure~\ref{fig:results}(a)--(c) measure
vulnerability (y-axis) as a function of time (x-axis)
for each analysis.\footnote{These are best viewed on a color display.}
These three figures characterize three interesting ``zones'' in the
space of complexity and precision. The results for method I are not shown in any of the figures.
This is because I always produces a vulnerability of $1$. The refinement methods (CE, S, and CE+S)
are all over the interval domain, and should be considered as ``improving'' the vulnerability of I.

\begin{figure}[t]
\begin{tabular}{cc}
  \begin{minipage}{.5\columnwidth}
  \includegraphics[width=\columnwidth]{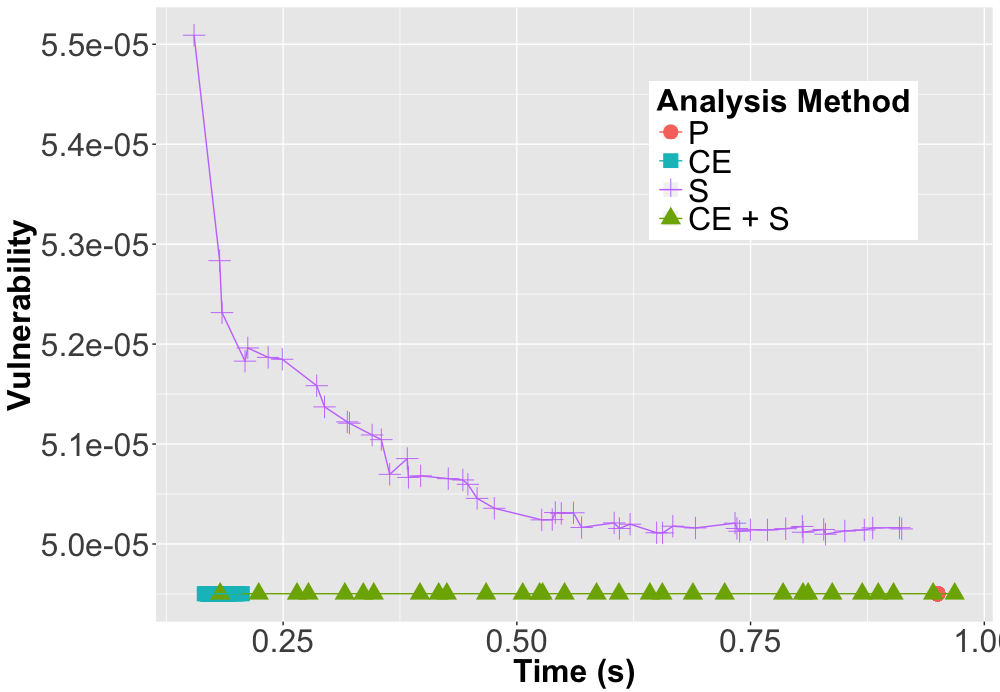}
  \end{minipage} &
  \begin{minipage}{.5\columnwidth}
    \includegraphics[width=\columnwidth]{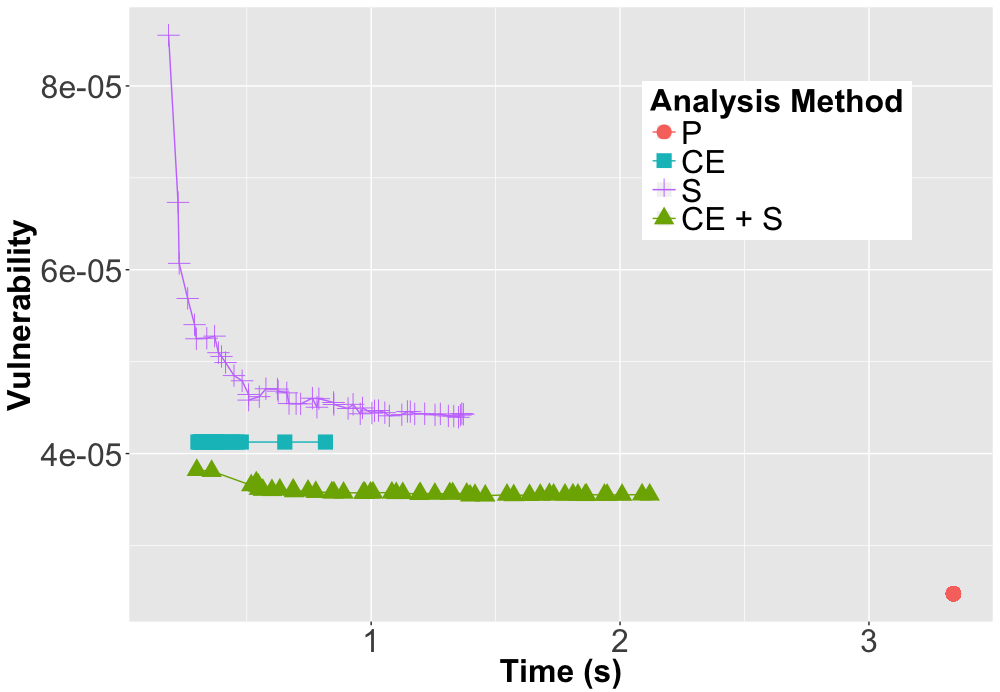}
  \end{minipage} \\
\\
  \parbox{.5\columnwidth}{\centering (a) Vulnerability vs. time,\\ $c = 1$ and $p
  = 1$} &
  \parbox{.5\columnwidth}{\centering (b) Vulnerability vs. time,\\ $c =
  2$ and $p = 4$} \\
\\
  \begin{minipage}{.5\columnwidth}
  \includegraphics[width=\columnwidth]{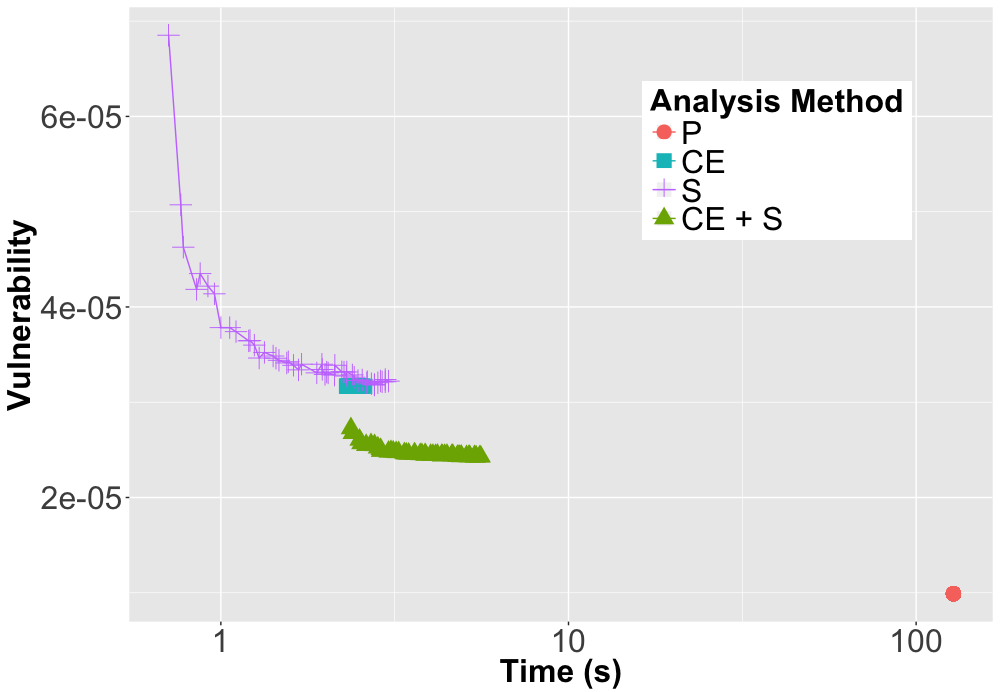}
  \end{minipage} &
  \begin{minipage}{.5\columnwidth}
    \includegraphics[width=\columnwidth]{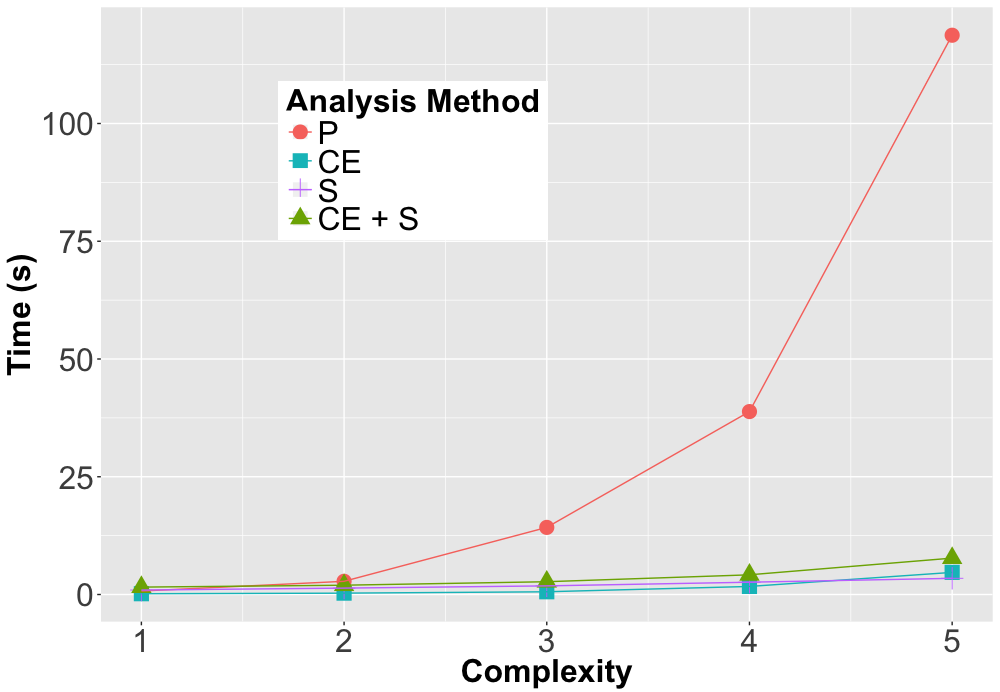}
  \end{minipage} \\
\\
  \parbox{.5\columnwidth}{\centering (c) Vulnerability vs. time,\\ $c =
  5$ and $p = 32$ (X-axis is log-scale)} &
  \parbox{.5\columnwidth}{\centering (d) Time vs. complexity,\\ $n =
                                          50,000$ and $p = 64$} \\
\end{tabular}
\caption{Experimental results}
\label{fig:results}
\end{figure}

In Figure~\ref{fig:results}(a) we fix $c = 1$ and $p = 1$.
In this configuration, baseline analysis P can compute the true vulnerability
in $\sim\!0.95$ seconds. Analysis CE is also able to compute the true vulnerability,
but in $\sim\!0.19$ seconds. Analysis S is able to compute a vulnerability to within $\sim\!5 \cdot e^{-6}$
of optimal in $\sim\!0.15$ seconds. These data points support two key observations. First, even a very modest
number of samples improves vulnerability significantly over just
analyzing with intervals. Second, concolic execution is
only slightly slower and can achieve the optimal vulnerability. Of course, concolic execution is not a panacea. As
we will see, a feature of this configuration is that no joins take place during abstract interpretation. This is
critical to the precision of the concolic execution.

In Figure~\ref{fig:results}(b) we fix $c = 2$ and $p = 4$. In contrast the the configuration
of Figure~\ref{fig:results}(a), the values for $c$ and $p$ in this configuration are not sufficient
to prevent all joins during abstract interpretaion. This has
the effect of taking polygons that represent individual paths through the program and joining them into
a single polygon representing many paths. We can see that this is the
case because baseline analysis P is
now achieving a better vulnerability than CE. However, one pattern from the previous configuration
persists: all three refinement methods (CE, S, CE+S) can achieve vulnerability within $\sim\!1 \cdot e^{-5}$ of P, but in $\frac{1}{4}$
the time. In contrast to the previous configuration,
analysis CE+S
is now able to make a modest improvement over CE (since it does not achieve the optimal).

In Figure~\ref{fig:results}(c) we fix $c = 5$ and $p = 32$. This configuration magnifies
the effects we saw in Figure~\ref{fig:results}(b). Similarly, in this configuration
there are joins happening, but the query is much more complex and the analysis
is much more precise. In this figure, we label the X axis as
a log scale over time. This is because analysis P took over two minutes to complete, in
contrast the longest-running refinement method, which took less than $6$ seconds. The relationship between the
refinement analyses is similar to the previous configuration. The key observation here is that, again, all
three refinement analyses achieve within $\sim\!3 \cdot e^{-5}$ of P, but this time in $4\%$ of the time
(as opposed to $\frac{1}{4}$ in the previous configuration).

Figure~\ref{fig:results}(d) makes more explicit the
relationship between refinements (CE, S, CE+S) and P. We fix $n = 50,000$ (the maximum) here, and
$p = 64$ (the maximum). We can see that as query complexity goes up, P
gets exponentially slower, while CE, S, and CE+S slow at a much
lower rate, while retaining (per the previous graphs) similar precision.

\subsection{Evacuation Problem}
\label{sec:evac}

\begin{table}[t]
\normalsize
\centering
  \caption{Analyzing a 3-ship resource allocation run}
  \label{tab:resource_alloc}
  \begin{tabular}{|l|c|c|}
    \hline
    \multicolumn{3}{|c|}{Resource Allocation (3 ships)} \\
    \hline
    Analysis & Time (s) & Vulnerability \\ \hline
    P & Timeout (5 min) & N/A \\
    I & $0.516$ & $1$ \\
    CE & $16.650$ & $1.997 \cdot 10^{-24}$ \\
    S & $1.487$ & $1.962 \cdot 10^{-24}$ \\
    CE+S & $17.452$ & $1.037 \cdot 10^{-24}$ \\ \hline
  \end{tabular}
\end{table}

We conclude this section by briefly discussing an analysis of an
execution of the resource allocation algorithm of
Figure~\ref{fig:bin-search}. In our experiment, we set the number of
ships to be three, where two were in range $d=300$ of the evacuation
site, and their sum-total berths ($500$) were sufficient to satisfy demand at the
site (also $500$). For our analysis refinements we set $n=1000$. Running the
algorithm, a total of seven pairs of \textit{Nearby}
and \textit{Capacity} queries were issued. In the end, the algorithm
selects two ships to handle the evacuation.

Table~\ref{tab:resource_alloc} shows the time to execute the algorithm
using the different analysis methods, along with the computed
vulnerability---this latter number represents the coordinator's view
of the most likely nine-tuple of the private data of the three ships involved
(x coordinate, y coordinate, and capacity for each). We can see that,
as expected, our refinement analyses are far more efficient than
baseline P, and far more precise than baseline I\@. The CE methods
are precise but slower than S. This is because of the need to count
the number of points in the DNF of the concolic path conditions,
which is expensive.

\section{Related Work}
\label{sec:related}

\textit{Quantifying Information Flow.}
%
% Formal approaches to security based on
% noninterference~\cite{Goguen:82:SAP} are often too strong: most
% programs release some information. 
There is a rich research literature on techniques
% characterizing application-specific situations when it is safe to
% violate noninterference; such characterizations are variously called
% \emph{declassification} or \emph{information release}
% policies~\cite{Sabelfeld:2009:DDP:1662658.1662659}. These policies
% implicitly take it that some secret information can be released. One
that aim to \emph{quantify} information that a program
may release, or has released, and then use that quantification as a
basis for policy.
One question is what measure of information release should be
used. Past work largely considers information theoretic measures, including
% (i)
\emph{Bayes vulnerability}~\cite{smith09foundations} and \emph{Bayes risk}~\cite{Chatzikokolakis:08:JCS}, 
% which measures the probability of the adversary correctly guessing the secret in one try;
%(ii) 
\emph{Shannon entropy}~\cite{shannon48communication},
%which measures the expected number of tries needed for an adversary to correctly guess
% the secret by employing a binary search; 
and 
% (iii)
\emph{guessing entropy}~\cite{massey94guessing}.
%, which measures the
%expected number of tries needed for an adversary to correctly guess
%the secret by employing a linear search.
The \emph{$g$-vulnerability} framework~\cite{alvim12gain} was recently
introduced to express measures having richer operational
interpretations, and subsumes other measures. 

Our work focuses on Bayes Vulnerability, which is related to min entropy.
Vulnerability is appealing operationally: As Smith~\cite{smith09foundations}
explains, it estimates the risk of the secret being guessed in one try. While
challenging to compute, this approach provides meaningful results for
non-uniform priors. Work that has focused on other, easier-to-compute metrics,
such as Shannon entropy and channel capacity, require deterministic
programs and priors that conform to
uniform distributions~\cite{McCamantE2008,backes09automatic,Mu:2009:inverval-qif,kopf:rybalchenko,KLEBANOV2014124,kr13}.
Like Mardziel et al.~\cite{mardziel13belieflong}, we are able to
compute the worst-case vulnerability, i.e., due to a particular
output, rather than
a \emph{static} estimate, i.e., as an expectation over all possible outputs.
K{\"o}pf and Basin \cite{koepfbasin07} originally proposed this idea, and
Mardziel et al. were the first to implement it, followed by several
others~\cite{Besson2014,Guarnieri17,Kucera:2017:SPP:3133956.3134079}. 

K\"opf and Rybalchenko~\cite{kopf:rybalchenko} (KR) also use sampling
and concolic execution to statically quantify information leakage. But
their approach is quite different from ours. KR uses sampling of a
query's inputs in lieu of considering (as we do) all possible outputs,
and uses concolic execution with each sample to ultimately compute
Shannon entropy, by underapproximation, within a confidence
interval. This approach benefits from not having to enumerate outputs,
but also requires expensive model counting \emph{for each sample}. By
contrast, we use sampling and concolic execution \emph{from the posterior}
computed by abstract interpretation, using the results to boost the
lower bound on the size/probability mass of the abstraction.
Our use of sampling is especially efficient, and the use of concolic
execution is completely sound (i.e., it retains 100\% confidence in
the result).  As with the above
work, KR requires deterministic programs and uniform priors.

% Our work is distinguished from this past work in several ways. First,
% we consider full-featured programs, rather than channel programs; the
% latter are useful for mathematical characterization but are
% impractical as a basis for an actual analysis tool. Second, for the
% work that has considered program analysis, most tools have assumed a
% uniform prior to characterize adversary knowledge, and do not compute
% min entropy. Doing so simplifies the analysis problem because entropy
% measures then reduce to counting the number of possible outputs. Our
% work supports computing min entropy, which is more operationally
% actionable, and more general priors, which are critical for tracking
% information release \emph{on-line} by modeling released knowledge as
% queries are issued. 

\textit{Probabilistic Programming Langauges.}
A probabilistic program is essentially a lifting of a normal program operating
on single values to a program operating on distributions of values. As a
result, the program represents a joint distribution over its
variables~\cite{Gordon:2014:PP:2593882.2593900}. 
%Probabilistic languages are
%distinguished by what operations they support over this joint distribution, how
%they perform them, and how efficient and precise the results are.
As discussed in this paper, quantifying the information released by a
query can be done by writing the query in a probabilistic programming
language (PPL) and representing the uncertain secret inputs as distributions. Quantifying
release generally 
%
% The primary goal of our work is to determine the probability of the
% most likely value in the posterior probability distribution that
% results from running a sensitive query.  Depending on the query, this
corresponds to either the maximum likelihood estimation (MLE) problem
or the maximum a-posteriori probability (MAP) problem.
Not all PPLs support computation of MLE and
MAP, but several do. % among those that do there are generally two approaches: exact
% inference and approximate methods.

PPLs based on partial sampling
\cite{goodman08church, park08sampling} or full enumeration
\cite{radul07probscheme} of the state space are unsuitable in our
setting: they are either too inefficient or too imprecise. 
%
% Approximate inference systems, such as Infer.net~\cite{InferNET14}, in contrast
% to our approach, do not provide sound upper bounds on posterior probability and
% can get stuck in local optima when the posterior is not convex. Furthermore,
% even when sampling is used to produce approximate bounds, the approach
% described in this paper provides a confidence measure to quantify the amount
% of certainty regarding the result.
%
PPLs based on algebraic decision diagrams
\cite{claret12bayesian}, graphical models \cite{milch05blog}, and factor graphs
\cite{borgstrom11measure, pfeffer07ibal, InferNET14} translate programs into convenient
structures and take advantage of efficient algorithms for their manipulation or
inference, in some cases supporting MAP or MLE queries
(e.g.~\cite{figaro,Narayanan2016}). PSI~\cite{Gehr2016} supports exact
inference via computation of precise 
symbolic representations of posterior distributions, and has been used
for dynamic policy
enforcement~\cite{Kucera:2017:SPP:3133956.3134079}. Guarnieri et
al.~\cite{Guarnieri17} use probabilistic logic programming as the
basis for inference; it scales well but only for a class of queries
with certain structural limits, and which do not involve numeric
relationships. 

Our implementation for probabilistic computation and inference differs from
the above work in two main ways. Firstly, we are capable of
\emph{sound} approximation and hence can trade off precision for
performance, while maintaining soundness 
in terms of a strong security policy. Even when using sampling, we are
able to provide precise confidence measures. The second difference is our
\emph{compositional} representation of probability distributions, which is based on numerical
abstractions: intervals~\cite{cousot76static}, octagons~\cite{mine01octagon},
and polyhedra~\cite{Cousot:1978:ADL:512760.512770}. The posterior can
be easily used as the prior for the next query, whereas prior work
would have to repeatedly analyze the composition of past queries. 

A few other works have also focused on abstract interpretation, or
related techniques, for reasoning about probabilistic
programs. Monniaux~\cite{Monniaux_these} defines an abstract domain
for distributions. Smith~\cite{smith08probabilistic} describes
probabilistic abstract interpretation for verification of quantitative
program properties. Cousot~\cite{cousot12probabilistic} unifies these
and other probabilistic program analysis tools. However, these do not
deal with sound distribution conditioning, which is crucial for
belief-based information flow analysis.  Work by Sankaranarayanan et
al~\cite{Sankaranarayanan:2013:SAP:2491956.2462179} uses a combination
of techniques from program analysis to reason about distributions
(including abstract interpretation), but the representation does not
support efficient retrieval of the maximal probability, needed to
compute vulnerability.

\section{Conclusions}

Quantitative information flow is concerned with measuring the
knowledge about secret data that is gained by observing the answer to
a query. This paper has presented a combination of static
analysis using probabilistic abstract interpretation, sampling, and
underapproximation via concolic execution to compute high-confidence
upper bounds on information flow more
precisely and efficiently than past work. Experimental results show
dramatic improvements in overall precision and/or performance compared
to abstract interpretation alone. As next steps, we plan to integrate static
analysis and sampling more closely so as to avoid precision loss at
decision points in programs. We also look to extend programs to be
able to store random choices in variables, to thereby implement more
advanced probabilistic structures.

% \subsubsection*{Acknowledgments}
% ...

% In the bibliography, use \texttt{\textbackslash textsuperscript} for ``st'', ``nd'', ...:
% E.g., \enquote{The 2\textsuperscript{nd} conference on examples}.
% When you use \href{http://www.jabref.org}{JabRef}, you can use the clean up command to achieve that.

%%%%%%%%%%%%%%%%%%%%%%%%%%%%%%%%%%%%%%%%%%%%%%%%%%%%%%%%%%%%%%%%%%%%%%%%%%%%%%%
\bibliographystyle{splncs03}
\bibliography{bib,post2017,references}

All links were last followed on November 24, 2017.
%%%%%%%%%%%%%%%%%%%%%%%%%%%%%%%%%%%%%%%%%%%%%%%%%%%%%%%%%%%%%%%%%%%%%%%%%%%%%%%

\appendix

\section{Formal semantics and proofs}
\label{sec:dist-semantics}

Here we defined the probabilistic semantics for the programming
language given in Figure~\ref{fig:syntax}. The semantics of statement
$\stmt$, written $\pevalp{\stmt}{}$, is a function of the form
$\dists \rightarrow \dists$, i.e., it is a function from distributions
of states to distributions of states. We write
$\pevalp{\stmt}{\delta} = \delta'$ to say that the semantics of
$\stmt$ maps input distribution $\delta$ to output distribution
$\delta'$.

Figure~\ref{fig:dist-semantics} gives this denotational semantics along with
definitions of relevant auxiliary operations.\footnote{The notation
  $\sum_{x \given \phi} \rho$ can be read \emph{$\rho$ is the sum over
    all $x$ such that formula $\phi$ is satisfied} (where $x$ is bound
  in $\rho$ and $\phi$).} We write $\pevalp{E}{\sigma}$ to denote the
(integer) result of evaluating expression $E$ in $\sigma$, and
$\pevalp{B}{\sigma}$ to denote the truth or falsehood of $B$ in
$\sigma$. The variables of a state $\sigma$, written $ \dom{\sigma} $,
is defined by $ \dom{\sigma} $; sometimes we will refer to this set as
just the \emph{domain} of $ \sigma $. We will also use the this
notation for distributions; $ \dom{\delta} \defeq
\dom{\dom{\delta}}$. We write $\lfp$ as the least fixed-point
operator.

This semantics is standard. See Clarkson et
al.~\cite{clarkson09quantifying} or Mardziel et
al~\cite{mardziel13belieflong} for detailed explanations.

% Projecting distributions onto a set of variables is as
% follows:
% \[\project{\delta}{V} \defeq \lambda \sigma_V \in \states_V \lsep \sum_{\tau \given \project{\tau}{V} = \sigma_V} \delta(\tau)\]

% We will often project away a single variable. We will call this
% operation \emph{forget}. Intuitively the distribution \emph{forgets}
% about a variable $ x $.
% $$ \forget{x}{\delta} \defeq \project{\delta}{\paren{\fv{\delta} -
%     \set{x}}} $$

% The \emph{mass} of a distribution, written $ \pmass{\delta} $ is the sum
% of the probabilities ascribed to states, $ \sum_{\sigma}
% \delta(\sigma) $.  A \emph{normalized distribution} is one such that $
% \pmass{\delta} = 1 $.  A normalized distribution can
% be constructed by scaling a distribution according to its
% mass:
% $$ \normal{\delta} \defeq \frac{1}{\pmass{\delta}} \cdot \delta $$
% Normalization requires the mass of a distribution to be non-zero. We
% will only be dealing with distributions of finite mass but some of the
% theory presented later makes use of zero-mass
% distributions. There is one such distribution for each domain; when
% the domain is understood from the context we will label its zero-mass
% distribution as $ \distzero{} $.

% The \emph{support} of a distribution is the set of states which have
% non-zero probability: $\nzset{\delta} \defeq \{\sigma \given
% \delta(\sigma) > 0\}$.

\begin{figure}
%{\small
\centering
\begin{displaymath}
\begin{array}{rcl}
\pevalp{\sskip}{\delta} & = & \delta \\
\pevalp{\sassign{x}{\aexp}}{\delta} & = & \delta \bparen{x \ra \aexp} \\
\pevalp{\sif{B}{\stmt_1}{\stmt_2}}{\delta} & = &
\pevalp{\stmt_1}{(\dcond{\delta}{B})} + \pevalp{\stmt_2}{(\dcond{\delta}{\neg B})} \\
\evalp{\spif{q}{\stmt_1}{\stmt_2}}{\delta} & = &
\evalp{\stmt_1}{(q \cdot\delta)} + \evalp{\stmt_2}{((1-q) \cdot \delta)} \\
\pevalp{\sseq{\stmt_1}{\stmt_2}}{\delta} & = & \pevalp{\stmt_2}{\paren{\evalp{\stmt_1}{\delta}}} \\
\pevalp{\swhile{\bexp}{\stmt}}{} & = & \lfp\left[\lambda
f :\ \dists
\rightarrow \dists \lsep \lambda \delta \lsep \right. \\
& & \left. \quad f\paren{\pevalp{\stmt}{(\dcond{\delta}{B})}} +
       \paren{\dcond{\delta}{\neg B}}\right]
\end{array}
\end{displaymath}
where
\begin{displaymath}
\begin{array}{l@{\;\defeq\;}l}
\delta \bparen{x \ra \aexp} & \lambda \sigma \lsep \sum_{\tau \given \tau
  \bparen{x \ra \eeval{\aexp}{\tau}} = \sigma} \delta (\tau) \\
\delta_1 + \delta_2 & \lambda \sigma \lsep \delta_1(\sigma) +
\delta_2(\sigma) \\
\dcond{\delta}{\bexp} & \lambda \sigma \lsep \aif \eeval{\bexp}{\sigma} \athen
\delta(\sigma) \aelse 0 \\
p \cdot \delta & \lambda \sigma \lsep p \cdot \delta(\sigma) \\
\pmass{\delta} & \sum_\sigma \delta(\sigma) \\
\normal{\delta} & \frac{1}{\pmass{\delta}} \cdot \delta \\
\drevise{\delta}{B} & \normal{\dcond{\delta}{B}} \\
\delta_1 \times \delta_2 & \lambda(\sigma_1, \sigma_2) \lsep
\delta_1(\sigma_1) \cdot \delta_2(\sigma_2) \\
\dot{\sigma} & \lambda \sigma_0 \lsep \aif \sigma = \sigma_0 \athen 1 \aelse
0 \\
\project{\sigma}{V} & \lambda x \in \vars_V \lsep \sigma(x)\\
\project{\delta}{V} & \lambda \sigma_V \in \states_V \lsep
\sum_{\tau \given \project{\tau}{V} = \sigma_V} \delta(\tau) \\
\forget{x}{\delta} & \project{\delta}{\paren{\dom{\delta} - \set{x}}} \\
\nzset{\delta} & \{\sigma \given \delta(\sigma) > 0\}
\end{array}
\end{displaymath}
%}
\vspace*{-.1in}
\caption{Distribution semantics}
\label{fig:dist-semantics}
\end{figure}

\subsection{Proofs}
\label{sec:proofs}

Here we restate the soundness theorems for our techniques, and include
their proofs.

\setcounter{theorem}{1}

\begin{theorem}[Sampling is Sound] \hfill \\
  \label{sampling_proof}
  If $\delta_0 \in \ppconc{\ppoly_0}$, $\abspevalp{\stmt}{\ppoly_0} =
  \ppoly$, and $\pevalp{\stmt}{\delta_0} = \delta$ then
  \[\delta_{T} \in \ppconc{\ppoly_{T+}} \text{ with confidence }\omega\]
  where
  \begin{align*}
  \delta_{T} &\defeq \project{\dcond{\delta}{(r = o)}}{T} \\
  \ppoly_{T} &\defeq \project{\dcond{\ppoly}{(r = o)}}{T} \\
  \ppoly_{T+} &\defeq \ppoly_{T} \text{ sampling revised with confidence }\omega.
  \end{align*}
  %
%  Given
%\begin{itemize}
%\item $\abspevalp{\stmt}{\ppoly_0} = \ppoly$
%\item  $\ppoly_{r=o} \defeq (\ppoly \mid r = o)$
%\item  $\ppoly_T \defeq \project{\ppoly_{r=o}}{T}$, and $\ppoly_{T+}$ is
%  computed from $\ppoly_T$ by the above procedure to confidence $c$.
%\end{itemize}
%Then for all $\delta_0, \delta$, if
%  $\delta_0 \in \ppconc{\ppoly_0}$ and
%  $\pevalp{\stmt}{\delta_0} = \delta$ and
%  $\exists \sigma. \sigma \in \nzset{\delta} $ and $\sigma(r) = o$ then
%  % $\delta_{r=o} = (\delta \mid r=o)$ then
%  $\project{\delta}{T} \in \ppconc{\ppoly_{T+}}$ with probability $c$.
\end{theorem}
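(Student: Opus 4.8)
The plan is to factor the claim into the deterministic soundness of abstract interpretation and a probabilistic argument about what the sampling loop actually measures. First I would apply Theorem~\ref{soundness} to $\delta_0 \in \ppconc{\ppoly_0}$ and $\abspevalp{\stmt}{\ppoly_0} = \ppoly$, giving $\delta \in \ppconc{\ppoly}$; since abstract conditioning ($\ppoly \wedge (r = o)$) and projection onto $T$ are sound operations on probabilistic polyhedra---part of the development of Mardziel et al.~\cite{mardziel13belieflong}---this yields $\delta_T \in \ppconc{\ppoly_T}$ with no appeal to $\omega$. What remains is to show that passing from $\ppoly_T$ to $\ppoly_{T+}$ preserves membership, with confidence $\omega$.

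The key structural observation is that $\ppoly_{T+}$ agrees with $\ppoly_T$ on the shape $\getpoly{T}$ and on the per-point bounds $\pmin{T}, \pmax{T}$; only the size bounds $\smin{}, \smax{}$ (and the mass bounds $\mmin{}, \mmax{}$ derived from them) may be tightened. Consequently the two conjuncts of the definition of $\ppconc{\cdot}$ that mention neither size nor mass---namely $\support{\delta_T} \subseteq \pconc{\getpoly{T}}$ and $\forall \sigma \in \support{\delta_T}.\ \pmin{T} \leq \delta_T(\sigma) \leq \pmax{T}$---carry over unchanged from $\delta_T \in \ppconc{\ppoly_T}$. So it suffices to establish, with confidence $\omega$, that (i) $\smin{T+} \leq \setsize{\support{\delta_T}} \leq \smax{T+}$ and (ii) $\mmin{T+} \leq \pmass{\delta_T} \leq \mmax{T+}$. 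Claim (ii) reduces to (i): once (i) holds, $\pmass{\delta_T} = \sum_{\sigma \in \support{\delta_T}} \delta_T(\sigma) \geq \pmin{T} \cdot \setsize{\support{\delta_T}} \geq \pmin{T} \cdot \smin{T+} = \mmin{T+}$, symmetrically for the upper bound, and keeping the old $\mmin{T}$ (resp.\ $\mmax{T}$) when it is already better only helps.

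For (i) I would prove the \emph{correspondence lemma}: a state drawn uniformly from the finite set $\pconc{\getpoly{T}}$ is scored as a success in step~\ref{step:sample}---i.e.\ some $\sigma \in \support{\pevalp{\stmt}{\dot{\sigma_T}}}$ has $\sigma(r) = o$---if and only if it lies in $\support{\delta_T}$. Given this, the success count $\alpha$ over $N$ independent uniform draws is $\mathrm{Binomial}(N,\theta)$ with $\theta = \setsize{\support{\delta_T}} / \psize{\getpoly{T}}$, so the $\mathrm{Beta}(\alpha,\beta)$ posterior's $\omega$-credible interval $[p_L,p_U]$ contains $\theta$ with probability $\omega$; scaling by $\psize{\getpoly{T}}$ gives $\smin{T+} = p_L \cdot \psize{\getpoly{T}} \leq \setsize{\support{\delta_T}} \leq p_U \cdot \psize{\getpoly{T}} = \smax{T+}$, and intersecting with the already-sound $\smin{T}, \smax{T}$ keeps both bounds sound.

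I expect the correspondence lemma to be the main obstacle, and note that it must be exact in both directions---a spurious success would make $\smin{T+}$ unsound, a missed success would make $\smax{T+}$ unsound. It rests on three points. (a) Membership in $\support{\delta_T}$ is defined through $\delta = \pevalp{\stmt}{\delta_0}$ and hence through the whole prior; since $\pevalp{\cdot}{}$ is linear over distributions and (in our setting) queries read but do not overwrite secrets, the $\sigma_T$-slice of $\delta$ is exactly $\pevalp{\stmt}{\dot{\sigma_T}}$, so the single-point run faithfully tests reachability of $o$ from $\sigma_T$. (b) This uses that $\delta_0$ assigns positive probability to every input consistent with $\sigma_T$, which holds because our prior has full support over the region $\getpoly{T}$ refines; without this a success could be counted for a point of prior-probability zero. (c) When $\stmt$ contains $\spifk$, $\pevalp{\stmt}{\dot{\sigma_T}}$ is not a point distribution, so the test must---and does---quantify existentially over \emph{all} reachable outputs. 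Finally, the phrase ``with confidence $\omega$'' ranges only over the sampler's coin flips and composes with the deterministic reasoning above, so the conclusion $\delta_T \in \ppconc{\ppoly_{T+}}$ holds with confidence $\omega$.
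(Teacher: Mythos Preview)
Your proposal is correct and matches the paper's proof: reduce to $\delta_T \in \ppconc{\ppoly_T}$ via Theorem~\ref{soundness} together with the conditioning and projection lemmas of Mardziel et al., observe that only the size and mass conjuncts of Definition~\ref{def:ppoly} are at issue (since $\getpoly{T}$, $\pmin{T}$, $\pmax{T}$ are unchanged), derive the mass bounds from the size bounds, and obtain the size bounds by scaling the credible interval $[p_L,p_U]$ by $\psize{\getpoly{T}}$. Your explicit isolation of the correspondence lemma---that a sampled $\sigma_T$ is scored a success in step~\ref{step:sample} iff $\sigma_T \in \support{\delta_T}$---is in fact more careful than the paper, which simply writes $p = \setsize{\support{\delta_T}}/\psize{\getpoly{T}}$ and takes for granted that this is the quantity the sampling loop estimates.
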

\begin{proof}
Suppose we have some $\delta_0 \in \ppconc{\ppoly_0}$ whereby
$\pevalp{\stmt}{\delta_0} = \delta$. We want to prove that $\delta_T \in
\ppconc{\ppoly_{T+}}$. Per Definition~\ref{def:ppoly}, this means we must show that
\begin{enumerate}
\item[(1)] $\nzset{\delta_T} \subseteq \ppconc{\getpoly{T+}}$
\item[(2)] $\smin{T+} \leq |\nzset{\delta_T}| \leq \smax{T+}$
\item[(3)] $\mmin{T+} \leq \pmass{\delta_T} \leq \mmax{T+}$
\item[(4)] $\forall \sigma \in \nzset{\delta_T}.\, \pmin{T+} \leq \delta_T(\sigma) \leq \pmax{T+}$
\end{enumerate}
Our proof goes as follows. First, we know that $\delta_T \in
\ppconc{\getpoly{T}}$ by  Theorem~\ref{soundness}, Lemma 15 and Lemma 7 of Mardziel et al. By
  Definition~\ref{def:ppoly}, this means
\begin{enumerate}
\item[(a)] $\nzset{\delta_T} \subseteq \ppconc{\getpoly{T}}$
\item[(b)] $\smin{T} \leq |\nzset{\delta_T}| \leq \smax{T}$
\item[(c)] $\mmin{T} \leq \pmass{\delta_T} \leq \mmax{T}$
\item[(d)] $\forall \sigma \in \nzset{\delta_T}.\, \pmin{T} \leq \delta_T(\sigma) \leq \pmax{T}$
\end{enumerate}
So (1) and (4) follow directly from (a) and (d), since $\pmin{T}
= \pmin{T+}$, $\pmax{T} = \pmax{T+}$, and $\getpoly{T} = \getpoly{T+}.$

To prove (2), we argue as follows. Let $p =
\frac{|\nzset{\delta_T}|}{\psize{\getpoly{T}}}$, which
    represents the probability that a randomly selected
point from $\getpoly{T}$ is in $\nzset{\delta_T}$. From the computed credible
interval over the Beta distribution, we have that $p \in [p_L, p_U ]$ with
confidence $\omega$. As such,
\[
\begin{array}{rcl}
    p_L &\leq p \leq & p_U\\
    p_L &\leq \frac{|\nzset{\delta_{T}}|}{\psize{\getpoly{T}}} \leq & p_U \\
    p_L \cdot \psize{\getpoly{T}} &\leq |\nzset{\delta_{T}}| \leq & p_U \cdot \psize{\getpoly{T}} \\
    \smin{T+} & \leq |\nzset{\delta_{T}}| \leq & \smax{T+} \\
  \end{array}
\]
which is the desired result.

To prove (3), first consider that if $\mmin{T+} = \mmin{T}$ then the
first half of (3) follows from the first half of (c). Otherwise, we
have that $\mmin{T+} = \pmin{T} \cdot \smin{T+}$. Then we can reason
the first half of (3) holds using the following reasoning:
\[
\begin{array}{rcll}
  \smin{T+} & \leq & |\nzset{\delta_T}| & \text{by (2)} \\
  \pmin{T+} \cdot \smin{T+} & \leq & \pmin{T+} \cdot
                                     |\nzset{\delta_T}| & \text{as }\pmin{T+}\text{ nonneg.}\\
  \mmin{T+} & \leq & \pmin{T+} \cdot
                                     |\nzset{\delta_T}| & \text{by def.}\\
            & & = \Sigma_{\sigma \in
  \nzset{\delta_T}} \pmin{T+} & \\
  &  & \leq
                                                          \Sigma_{\sigma
                                                          \in
                                                          \nzset{\delta_T}}
                                                          \delta_T(\sigma)
                                        & \text{by (4)} \\
& & = \pmass{\delta_T} & \\
  \end{array}
\]
We can prove the soundness of $\mmax{T+}$ (the other half of (3)) with
similar reasoning.
\end{proof}

\begin{theorem}[Concolic Execution is Sound] \hfill \\
  \label{concolic_proof}
  If $\delta_0 \in \ppconc{\ppoly_0}$, $\abspevalp{\stmt}{\ppoly_0} = \ppoly$, and $\pevalp{\stmt}{\delta_0} = \delta$ then

  \[ \delta_{T} \in \ppconc{\ppoly_{T+}} \]

  where
  \begin{align*}
  \delta_{T} &\defeq \project{\dcond{\delta}{(r = o)}}{T} \\
  \ppoly_{T} &\defeq \project{\dcond{\ppoly}{(r = o)}}{T} \\
  \ppoly_{T+} &\defeq \ppoly_{T} \text{ concolically revised.}
  \end{align*}
\end{theorem}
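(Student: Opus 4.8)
\noindent The plan is to follow the skeleton of the proof of \tref{sampling_proof}, checking the four clauses of \dref{def:ppoly} for $\ppoly_{T+}$. First I would record that the concolic revision only moves the lower bounds: it leaves $\getpoly{T+}=\getpoly{T}$, $\pmin{T+}=\pmin{T}$, $\pmax{T+}=\pmax{T}$, $\smax{T+}=\smax{T}$, $\mmax{T+}=\mmax{T}$, and sets $\smin{T+}=\max(\smin{T},n)$ and $\mmin{T+}=\max(\mmin{T},\pmin{T}\cdot n)$, where $n=\psize{\getpoly{}}$ and $\getpoly{}=\getpoly{T}\sqcap(\bigsqcup_i\getpoly{i})$ collects the points lying both in $\getpoly{T}$ and in some DNF disjunct $\getpoly{i}$ of the chosen run's path condition $\pi$. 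As in the sampling proof, $\delta_T\in\ppconc{\ppoly_T}$ by \tref{soundness} together with the projection and conditioning lemmas (Lemmas 7 and 15) of Mardziel et al., yielding the analogues of facts (a)--(d) used there. Clauses (1) and (4) for $\ppoly_{T+}$ are then immediate from (a) and (d), and the upper halves of clauses (2) and (3) are immediate from (b) and (c), since only the lower bounds were touched.

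The substance is the lower half of clause (2), i.e., $\smin{T+}\le|\nzset{\delta_T}|$. If $\smin{T+}=\smin{T}$ this is (b); otherwise $\smin{T+}=n=|\pconc{\getpoly{}}|$, so it suffices to prove the containment $\pconc{\getpoly{}}\subseteq\nzset{\delta_T}$. For this I would prove a lemma connecting the concolic semantics of \fref{fig:symbolic-semantics} to the distribution semantics of \fref{fig:dist-semantics}: if $\config[(\sigma_T,\zeta_T)]{\stmt}\Downarrow^p_\pi(\sigma',\zeta')$ is the completed run used by the algorithm --- with $\zeta_T$ sending each secret $x\in T$ to a fresh $\alpha_x$, $p>0$, and $\sigma'(r)=o$ --- then for every $\sigma_T'\in\pconc{\getpoly{}}$, the concrete execution of $\stmt$ from $\dot{\sigma_T'}$ that resolves each $\spifk$ as the concolic run did follows the same control-flow path (because $\sigma_T'\models\pi$ pins every $\sifk$ and $\mathsf{while}$ guard to the branch recorded in $\pi$), ends in the state obtained from $\zeta'$ under $\alpha_x\mapsto\sigma_T'(x)$, and carries probability mass exactly $p>0$ along that path. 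Hence the reached output state $\sigma''$ satisfies $\project{\sigma''}{T}=\sigma_T'$, $\sigma''(r)=o$, and $\pevalp{\stmt}{\dot{\sigma_T'}}(\sigma'')\ge p>0$; relying, as the sampling proof does, on $\delta_0$ having full support over the secret region $\pconc{\getpoly{0}}$ (met by \eg the uniform prior of Section~\ref{sec:overview}), we get $\delta_0(\sigma_T')>0$, hence $\delta(\sigma'')>0$, hence $\sigma_T'\in\nzset{\delta_T}$. The lower half of clause (3) then follows verbatim from the sampling proof: if $\mmin{T+}\ne\mmin{T}$ then $\mmin{T+}=\pmin{T+}\cdot n\le\pmin{T+}\cdot|\nzset{\delta_T}|=\sum_{\sigma\in\nzset{\delta_T}}\pmin{T+}\le\sum_{\sigma\in\nzset{\delta_T}}\delta_T(\sigma)=\pmass{\delta_T}$, using clauses (2) and (4).

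I expect the concolic lemma to be the main obstacle. Its delicate point is not the control-flow claim (a routine induction on the concolic rules shows any $\sigma_T'\models\pi$ follows the recorded path) but the output claim: along a fixed path the final value $\zeta'(r)$ of the output variable is an affine symbolic expression in the $\alpha_x$ (affine by the linearity restrictions on expressions in \fref{fig:syntax}), so $\sigma_T'\models\pi$ alone need not force $\zeta'(r)$ to evaluate to $o$. The argument must therefore use that $\getpoly{}$ is intersected with $\getpoly{T}=\project{\dcond{\ppoly}{(r=o)}}{T}$ --- equivalently, that conjoining the affine constraint $\zeta'(r)=o$ to $\pi$ stays within the polyhedral fragment --- and show this is exactly what removes the points that follow the path but emit an output other than $o$. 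A second point needing care is $\spifk$: the lemma must hold the probabilistic choices of the run fixed, so that one reasons about a single positive-probability path through \fref{fig:dist-semantics} rather than about the whole output distribution. With the lemma in hand, the remaining steps are identical bookkeeping to \tref{sampling_proof}.
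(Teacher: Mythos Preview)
Your skeleton matches the paper's exactly: verify the four clauses of \dref{def:ppoly}, observe that (1), (4), and the upper bounds are untouched, reduce the lower half of (2) to the containment $\pconc{\getpoly{}}\subseteq\nzset{\delta_T}$, and derive (3) from (2) and (4) as in \tref{sampling_proof}. The paper's proof simply \emph{asserts} this containment (``from the definition of concolic execution''---there is even an internal author comment questioning whether it needs proof), whereas you try to establish it via a lemma relating the concolic and distribution semantics. That is strictly more than the paper does.

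You also correctly flag two points the paper elides: (i) the path condition $\pi$ does not by itself pin the output to $o$ when $\zeta'(r)$ depends on the $\alpha_x$, and (ii) the containment requires $\delta_0$ to have full support on $\pconc{\getpoly{0}}$, an assumption the paper never states (and which the sampling proof silently uses as well).

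Your proposed resolution of (i), however, does not work. You write that intersecting with $\getpoly{T}$ is ``equivalently'' conjoining the affine constraint $\zeta'(r)=o$ to $\pi$, but these are not the same: $\getpoly{T}$ is the shape of $\project{\dcond{\ppoly}{(r=o)}}{T}$, an \emph{over}-approximation produced by abstract interpretation, so $\sigma_T'\in\pconc{\getpoly{T}}$ does not imply that execution from $\sigma_T'$ actually yields $r=o$. Hence $\getpoly{T}\sqcap(\bigsqcup_i\getpoly{i})$ can still contain path-consistent states with the wrong output, and the containment fails for them. What \emph{would} work is precisely your parenthetical---add $\zeta'(r)=o$ to $\pi$---but the algorithm in Section~\ref{sec:concolic} does not do that. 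In all of the paper's examples $r$ is assigned a constant along every path, so $\zeta'(r)$ is $\alpha$-free and the issue evaporates; for general $\stmt$ it is a genuine gap, in both your proposal and the paper's own proof.
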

\begin{proof}
  Our proof is quite similar to that of
  Theorem~\ref{sampling_proof}. Once again we proceed to show the four
  elements of Definition~\ref{def:ppoly}, where (1) and (4) hold by
  construction. To prove (2) we are only concerned with the inequality
  $\smin{T} \leq |\nzset{\delta_{T}}|$, since $\smax{T+} =
  \smax{T}$. We know by the soundness of $\ppoly_{T}$ that
\[
    \smin{T} \leq |\nzset{\delta_{T}}|
\]
  From the definition of concolic
 execution we have $\{ \sigma \mid \sigma \in C_T \land \sigma \models
 \pi \} \subseteq \{ \sigma \mid \delta_{T}(\sigma) > 0 \}$. Notice that
 this is just saying that the concolic execution is a valid
 under-approximation for the support.
 \mwh{Is
   the previous really by the definition? It seems like something we
   should have to prove.} From this we know that:
  \begin{align*}
    |\{ \sigma \mid \sigma \in C_T \land \sigma \models \pi \}| &\leq |\{ \sigma \mid \delta_{T}(\sigma) > 0 \}|\\
    \psize{C_T \sqcap (\bigsqcup_i C_i)} &\leq |\{ \sigma \mid \delta_{T}(\sigma) > 0 \}| \\
    \psize{C_T \sqcap (\bigsqcup_i C_i)} &\leq |\nzset{\delta_{T}}| \\
    \smin{T+} &\leq |\nzset{\delta_{T}}| \\
  \end{align*}
Given (2), our proof of (3) proceeds similarly to Theorem~\ref{sampling_proof}.
\end{proof}

\begin{theorem}[Concolic and Sampling Composition is Sound] \hfill \\
  \label{combo_proof}
  If $\delta_0 \in \ppconc{\ppoly_0}$, $\abspevalp{\stmt}{\ppoly_0} = \ppoly$, and $\pevalp{\stmt}{\delta_0} = \delta$ then

  \[ \delta_{T} \in \ppconc{\ppoly_{T+}} \]

  where
  \begin{align*}
  \delta_{T} &\defeq \project{\dcond{\delta}{(r = o)}}{T} \\
  \ppoly_{T} &\defeq \project{\dcond{\ppoly}{(r = o)}}{T} \\
  \ppoly_{T+} &\defeq \ppoly_{T} \text{ sampling-and-concolically revised with confidence }\omega
  \end{align*}
\end{theorem}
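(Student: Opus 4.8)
The plan is to mirror the proofs of Theorem~\ref{sampling_proof} and Theorem~\ref{concolic_proof}, establishing the four conditions of Definition~\ref{def:ppoly} for $\ppoly_{T+}$. First I would invoke Theorem~\ref{soundness} (together with Lemma~15 and Lemma~7 of Mardziel et al.) to obtain $\delta_T \in \ppconc{\ppoly_T}$, i.e.\ facts (a)--(d) exactly as in those proofs. Writing $\getpoly{}$ for the concolic under-approximating region $\getpoly{T} \sqcap (\bigsqcup_i \getpoly{i})$, we have $\pconc{\getpoly{}} \subseteq \pconc{\getpoly{T}}$. Since the revision leaves the shape and the per-point probability bounds untouched ($\getpoly{T+} = \getpoly{T}$, $\pmin{T+} = \pmin{T}$, $\pmax{T+} = \pmax{T}$), conditions (1) and (4) follow immediately from (a) and (d).

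The crux will be condition (2), where the two techniques interact. As in the proof of Theorem~\ref{concolic_proof}, soundness of concolic execution gives $\pconc{\getpoly{}} \subseteq \nzset{\delta_T}$, so every one of the $\psize{\getpoly{}}$ points of the concolic region is a genuine support point, contributing with full certainty. Because the combined procedure discards any sample landing in $\pconc{\getpoly{}}$, the Beta distribution it forms counts hits and misses only among points drawn uniformly from $\pconc{\getpoly{T}} \setminus \pconc{\getpoly{}}$; hence $[p_L,p_U]$ bounds, with confidence $\omega$, the conditional proportion
\[
p' \;\defeq\; \frac{\setsize{\nzset{\delta_T} \cap (\pconc{\getpoly{T}} \setminus \pconc{\getpoly{}})}}{\psize{\getpoly{T}} - \psize{\getpoly{}}} .
\]
Using $\nzset{\delta_T} \subseteq \pconc{\getpoly{T}}$ from (a) and the disjoint decomposition $\pconc{\getpoly{T}} = \pconc{\getpoly{}} \uplus (\pconc{\getpoly{T}} \setminus \pconc{\getpoly{}})$, I would then argue
\[
\setsize{\nzset{\delta_T}} \;=\; \psize{\getpoly{}} + p' \cdot (\psize{\getpoly{T}} - \psize{\getpoly{}}) ,
\]
and substituting $p_L \le p' \le p_U$ yields $\smin{T+} \le \setsize{\nzset{\delta_T}} \le \smax{T+}$, with confidence $\omega$ since the concolic contribution carries no uncertainty. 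Condition (3) then follows exactly as in Theorem~\ref{sampling_proof}: when $\mmin{T+} = \pmin{T} \cdot \smin{T+}$ improves on $\mmin{T}$, chain $\mmin{T+} \le \pmin{T+} \cdot \setsize{\nzset{\delta_T}} = \sum_{\sigma \in \nzset{\delta_T}} \pmin{T+} \le \sum_{\sigma \in \nzset{\delta_T}} \delta_T(\sigma) = \pmass{\delta_T}$ using (2) and (4); the $\mmax{T+}$ half is symmetric; and otherwise the bound is inherited from (c).

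The main obstacle is the bookkeeping in condition (2): one must be careful that ignoring the concolic samples changes what the credible interval is \emph{about} --- it now bounds the proportion of support within the complement $\pconc{\getpoly{T}} \setminus \pconc{\getpoly{}}$ rather than within all of $\pconc{\getpoly{T}}$ --- and that the two contributions are combined correctly, the concolic region with certainty and the sampled complement with confidence $\omega$, so that the overall guarantee is still ``with confidence $\omega$.'' A secondary subtlety is checking $\psize{\getpoly{}} \le \psize{\getpoly{T}}$ and that the definitions of $\smin{T+},\smax{T+}$ genuinely improve on (or are clamped to) the existing $\smin{T},\smax{T}$, preserving monotonicity of the ornaments; these are routine given $\pconc{\getpoly{}} \subseteq \pconc{\getpoly{T}}$.
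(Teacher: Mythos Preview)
Your proposal is correct and follows essentially the same route as the paper's proof: conditions (1) and (4) are immediate since the shape and per-point bounds are unchanged, condition (2) is obtained by splitting $\setsize{\nzset{\delta_T}}$ into the concolic region (counted exactly as $\psize{\getpoly{}}$ via the under-approximation property) plus the sampled complement (bounded via the credible interval on $p'$ over $\pconc{\getpoly{T}} \setminus \pconc{\getpoly{}}$), and condition (3) is derived from (2) exactly as in Theorem~\ref{sampling_proof}. Your discussion of the bookkeeping --- that the credible interval now concerns the complement rather than all of $\getpoly{T}$, and that the concolic contribution carries no uncertainty --- is precisely the point, and matches the paper's argument.
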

\begin{proof}
Our proof is quite similar to that of Theorem 2. Once again we proceed
to show the four elements of Definition 2, where (1) and (4) hold by construction.
To prove (2), consider the set $|\nzset{\delta_T}|$. For any state $\sigma_T \in \nzset{\delta_T}$
it must be the case that either $\sigma_T \in \getpoly{}$ or $\sigma_T \in \getpoly{T} \setminus \getpoly{}$.
This is because $\getpoly{} \subseteq \getpoly{T}$. Thus, we have

\[
|\nzset{\delta_T}| = |\nzset{\getpoly{T} \setminus \getpoly{}}| + |\nzset{\getpoly{}}|
\]

Additionally, since $\getpoly{}$ represents a sound under-approximation of the support, we know that

\[
|\nzset{\getpoly{}}| = \psize{\getpoly{}}
\]

So, $p = \frac{|\nzset{\getpoly{T} \setminus \getpoly{}}|}{\psize{\getpoly{T} \setminus \getpoly{}}}$ represents
the probability that a point in the region surrounding $\getpoly{}$ is in the support of $\delta_T$. Note that our
procedure implements a uniform, random sample over this region. Thus, from the computed credible interval over the
Beta distribution, we have that $p \in [p_L, p_U]$ with confidence $\omega$. As such, \mwh{The following formats badly}
\[
\begin{array}{rcl}
    p_L &\leq p \leq & p_U\\
    p_L &\leq \frac{|\nzset{\getpoly{T} \setminus \getpoly{}}|}{\psize{\getpoly{T} \setminus \getpoly{}}} \leq & p_U \\
    p_L \cdot \psize{\getpoly{T} \setminus \getpoly{}} &\leq |\nzset{\getpoly{T} \setminus \getpoly{}}| \leq & p_U \cdot \psize{\getpoly{T} \setminus \getpoly{}} \\
    p_L \cdot \psize{\getpoly{T} \setminus \getpoly{}} + |\nzset{\getpoly{}}| &\leq |\nzset{\getpoly{T} \setminus \getpoly{}}| + |\nzset{\getpoly{}}| \leq & p_U \cdot \psize{\getpoly{T} \setminus \getpoly{}} + |\nzset{\getpoly{}}| \\
    p_L \cdot \psize{\getpoly{T} \setminus \getpoly{}} + |\nzset{\getpoly{}}| &\leq |\nzset{\delta_T}| \leq & p_U \cdot \psize{\getpoly{T} \setminus \getpoly{}} + |\nzset{\getpoly{}}| \\
    p_L \cdot \psize{\getpoly{T} \setminus \getpoly{}} + \psize{\getpoly{}} &\leq |\nzset{\delta_T}| \leq & p_U \cdot \psize{\getpoly{T} \setminus \getpoly{}} + \psize{\getpoly{}} \\
    p_L \cdot (\psize{\getpoly{T}} - \psize{\getpoly{}}) + \psize{\getpoly{}} &\leq |\nzset{\delta_T}| \leq & p_U \cdot (\psize{\getpoly{T}} - \psize{\getpoly{}}) + \psize{\getpoly{}} \\
    \smin{T+} & \leq |\nzset{\delta_{T}}| \leq & \smax{T+} \\
  \end{array}
\]

which is the desired result. \\

Given (2), our proof of (3) proceeds similarly to Theorem~\ref{sampling_proof} and~\ref{concolic_proof}.
\end{proof}

\section{Query code}
\label{app:code}

The following is the query code of the example developed in
Section~\ref{sec:samp_and_conc}. Here, \lstinline|s_x| and
\lstinline|s_y| represent a ship's secret location. The
variables \lstinline|l1_x|, \lstinline|l1_y|, \lstinline|l2_x|,
\lstinline|l2_y|, and \lstinline|d| are inputs to the query. The
first pair represents position $L_1$, the second pair represents the
position $L_2$, and the last is the distance threshold, set to $4$. We
assume for the example that $L_1$ and $L_2$ have the same y
coordinate, and their x coordinates differ by 6 units.

We express the query in the language of Figure~\ref{fig:syntax} basically as
follows:
\begin{verbatim}
  d_l1 := |s_x - l1_x| + |s_y - l1_y|;
  d_l2 := |s_x - l2_x| + |s_y - l2_y|;
  if (d_l1 <= d || d_l2 <= d) then
    out := true // assume this result
  else
    out := false
\end{verbatim}
The variable \lstinline|out| is the result of the query. We simplify
the code by assuming the absolute value function is built-in; we can
implement this with a simple conditional. We run this query
probabilistically under the assumption that \lstinline|s_x| and
\lstinline|s_y| are uniformly distributed within the range given in
Figure~\ref{fig:data}. We then condition the output on the assumption that
\lstinline|out = true|. When using intervals as the baseline of
probabilistic polyhedra, this produces the result given in the upper
right of Figure~\ref{fig:prob-AI}(b); when using convex polyhedra, the
result is shown in the lower right of the figure. The use of sampling
and concolic execution to augment the former is shown via arrows
between the two.

\end{document}